\newcommand*\bigcdot{\mathpalette\bigcdot@{.5}}
\newcommand*\bigcdot@[2]{\mathbin{\vcenter{\hbox{\scalebox{#2}{$\m@th#1\bullet$}}}}}
\newtheorem{theorem}{Theorem}[section]
\newtheorem{proposition}[theorem]{Proposition}
\newtheorem{defi}[theorem]{Definition}
\newtheorem{rema}[theorem]{Remark}
\newtheorem{exam}[theorem]{Example}
\newenvironment{definition}{\begin{defi}\rm}{\hfill $\lhd$\end{defi}}
\newenvironment{proof}{\begin{trivlist}\item[]{\bf
Proof.}}{\hfill {\sc qed}\end{trivlist}}
\newtheorem{claim2}{\sc Claim}
\title{Locally Differentially Private In-Context Learning}
\name{\begin{tabular}{c}
Chunyan Zheng, Keke Sun, Wenhao Zhao, Haibo Zhou,\\ 
Lixin Jiang, Shaoyang Song, Chunlai Zhou 
\end{tabular}} 
\address{Renmin University of China,
         Beijing, CHINA \\
         jany.zh666@gmail.com, 
         \{skk2020, zhaowh, zhoub21,lixinjiang,songshaoyang,czhou\}@ruc.edu.cn\\}
\abstract{
Large pretrained language models (LLMs) have shown
surprising In-Context Learning (ICL) ability. An important application in deploying large language models is  to augment LLMs with a private database for some specific task.  The main problem with this promising   commercial use is that LLMs have been shown to memorize their training data and their prompt data are vulnerable to membership inference attacks (MIA) and prompt leaking attacks. 
In order to deal with this problem, we treat LLMs as \emph{untrusted in privacy} and  propose a \emph{locally differentially private framework of in-context learning} (LDP-ICL) in the settings where labels are sensitive.  Considering the mechanisms of in-context learning
in Transformers by gradient descent, we provide an analysis of the trade-off between privacy and utility in such LDP-ICL for classification. Moreover, we apply LDP-ICL to the 
discrete distribution estimation problem. In the end, we perform several experiments to demonstrate our analysis results. 
 \\ \newline \Keywords{In-context learning, local differential privacy, LLM} }
\begin{document}

\maketitleabstract

\section{Introduction} \label{sec:introduction}
\begin{figure*}[t]
    \centering
    \includegraphics[width=0.84\linewidth,height=6.2cm]{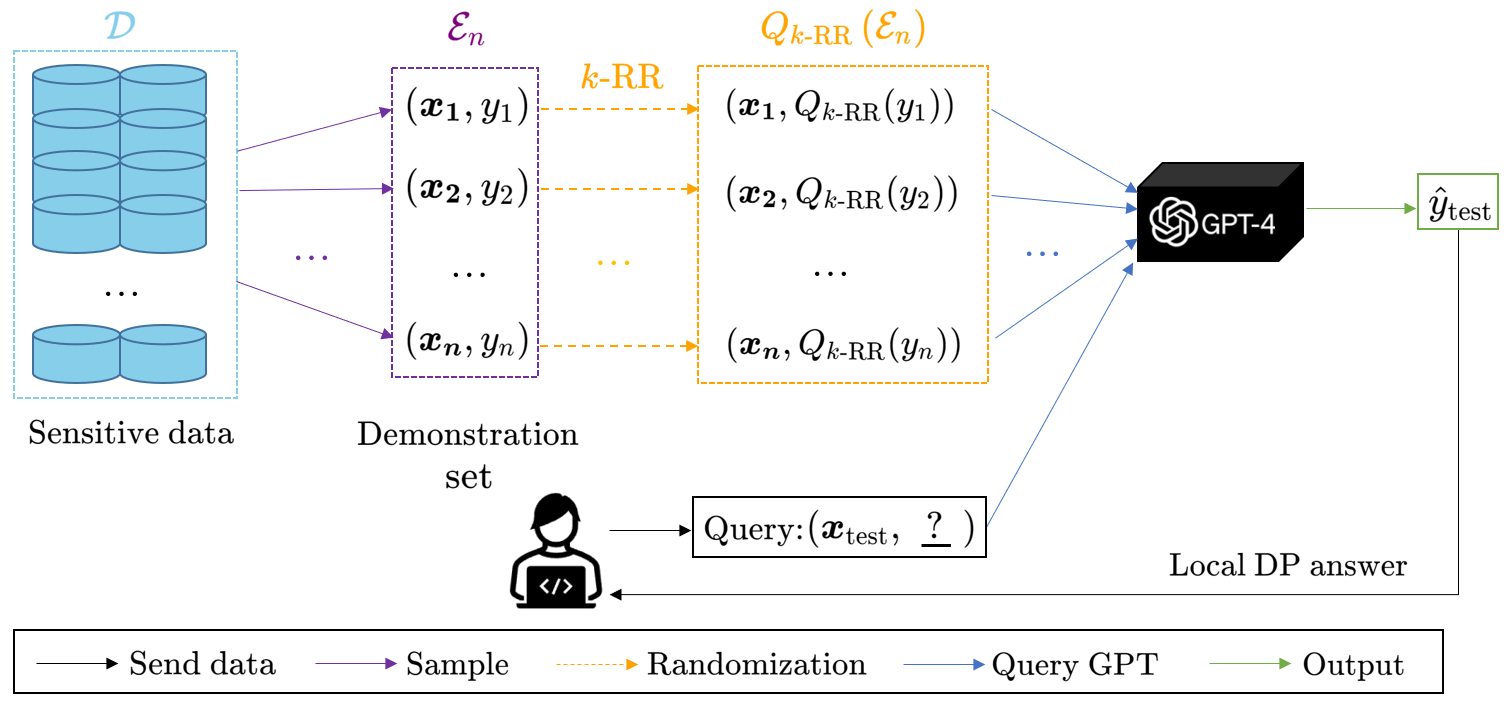}
    \captionsetup{justification=raggedright,singlelinecheck=false}
    \caption{The framework of LDP-ICL: We first sample a few input-label pairs from the original private database to form the demonstration set. Next we employ the $k$-ary randomized response mechanism $Q_{k\text{-RR}}$ to perturb the labels and then perform the ICL with a given query $\boldsymbol{x_{\text{test}}}$ prepended by the noisy demonstration set. At the end, the response is returned to the adversary.}
    \label{fig:main}
\end{figure*}

Large language models (LLMs) have exhibited surprising emergent abilities for in-context learning \citep{Brown2020language}. With a few  input-label pairs as exemplars, they can predict the label for an unseen input without additional parameter modifications. Although the training data for an LLM is usually assumed to be public and non-private, the demonstration pairs in in-context learning for the downstream task may contain private information about individual users and  are often considered to be \emph{sensitive}.  After Samsung leaked private data by using the LLM ChatGPT \citep{Mitchell2023Samsung} and Italy banned the use of ChatGPT due to the concern about the exposure of personal information,  it becomes imminent to study the privacy-preservation for the LLMs.  

In this paper, we propose a \emph{locally differentially private in-context learning} (LDP-ICL).  Differential privacy (DP) is now  a gold standard of privacy-preserving which addresses the paradox of learning nothing about an individual while learning useful information about a population \citep{Dwork2006calibrating}. There are two kinds of DP models: one is the central model and the other is the local model \citep{Kasiviswanathan2011can,Duchi2013local,Warner1965randomized}.  In the central model, the original private data are aggregated by the curator and then are perturbed by a DP mechanism before publishing. On the other hand, in the local model, the private data of each individual get randomized locally according to a DP mechanism and then are aggregated by the curator. The main difference between these two models is that the local model treats the data curator \emph{untrusted} while the central model believes in the curator.  In the in-context learning, LLMs are usually the data aggregator and  LLMs have been shown to memorize their training data \citep{Biderman2023emergent,Carlini2019secret},  and their prompt data are vulnerable to membership inference attacks (MIA) \citep{Duan2023flocks} and prompt leaking attacks \cite{Perez2022ignore}.  In this sense we consider LLMs as untrusted. Our first contribution is to propose a locally differentially private  mechanism for protecting individuals' privacy in ICL.  In this paper, we focus on the classification problem, especially the binary classification.  For each input-label pair in the demonstration set, we consider the input as an identifier and hence nonsensitive but regard the label as \emph{sensitive} \citep{Dinur2003revealing}. 
We employ the well-known LDP mechanism \emph{k-ary randomized response (k-RR)} to perturb each label and obtain an input-label pair with a noisy label \citep{Kairouz2016discrete,Wang2017locally}.  An adversary can query the LLM  with $\boldsymbol{x_{\text{test}}}$ prepended by such a perturbed demonstration set. Due to the noises in the labels in the demonstration set, the adversary obtains a corresponding noisy label as response  to the input $\boldsymbol{x_{\text{test}}}$, from which he cannot reliably tell the true label of any input in the private demonstration set. This implies that the privacy in the labels are protected. The process of such an LDP-ICL is illustrated in Figure \ref{fig:main}. 

Our second contribution is to propose a formula to represent the prediction output probability for the noisy label of the query in the LDP-ICL (Eq. (\ref{eq:LDP-ICL}))  by considering ICL for classification as an \emph{implicit} gradient-descent based optimization, which is the dual form of the Transformer attention in ICL \citep{Irie2022dual,von2023transformers,Dai2022can}. From this formula, we obtain the trade-off between privacy-preservation and utility, which is measured by the accuracy rate of the query answers.  When the privacy-preservation gets stronger, i.e., $\epsilon$ in $k$-RR gets smaller,  the accuracy  becomes smaller. Moreover, we run experiments on several datasets for the classification task and demonstrate the trade-off effects (Figure \ref{fig:classification}).  
In order  to support our understanding of LDP-ICL,  we apply it to a \emph{touch-stone} LDP problem: the discrete distribution estimation problem \citep{Kairouz2016discrete,Wang2017locally}. We design an algorithm with LDP-ICL to perform distribution estimation of sensitive labels in the original private database  (Algorithm \ref{algorithm:ldp-icl-estimation}) and compare the results with the classic Warner's mechanism for the same task \citep{Warner1965randomized}. Our results show that our algorithm performs better than Warner's mechanism in the privacy-utility trade-off for the high-privacy region. 

The rest of the paper is organized as follows. In Section \ref{sec:basics}, we present the definition of in-context learning. In Section \ref{sec:main}, we consider ICL as implicit  gradient-descent optimization which is a dual form of Transformers attention mechanism and propose LDP-ICL. And we analyse the trade-off between privacy and utility in the LDP-ICL and further deal with the discrete distribution estimation problem with the analysis. We perform experiments to support our analysis in Section \ref{sec:experiments} and conclude with related works and further problems in Section \ref{sec:conclusions}. Besides, we  will add  the effects with demonstration sets of different sizes in the extended version, which also include some experiment details and proofs.

%**************************
\section{In-context Learning} \label{sec:basics}

In this paper, we focus on  in-context learning(ICL) for classification tasks using large language models (LLMs) \citep{Brown2020language}. In-context learning is a paradigm that allows large language
models to learn tasks given only a few examples in the form of demonstrations, which is an emergent ability for LLMs. 
% Essentially, it estimates the likelihood of the potential answer conditioned on the demonstration by using a well-trained language model.  
Essentially, it gauges the probability of a prospective answer based on the provided demonstrations, leveraging a well-trained large language model.
For a classification task, given a query input text $\boldsymbol{x_{\text{test}}}$ and a candidate answer 
set $Y=\{y_1,y_2,\dots,y_M\}$, we need to predict a label $\hat{y}_{\text{test}} \in Y$ conditional on $n$ demonstration examples $\mathcal{E}_n=\{I,s(\boldsymbol{x_1},y_1),\dots,s(\boldsymbol{x_n},y_n)\}$ or 
$\mathcal{E}_n=\{s(\boldsymbol{x_1},y_1),\dots,s(\boldsymbol{x_n},y_n)\}$, where $I$ is an optional task 
instruction and $s(\boldsymbol{x_i},y_i) (1\leq i \leq n)$ is an example written in natural language texts according to the task. Formally, given a GPT3.5 model $\mathcal{M}$, we calculate the probability for each answer $y_j$ :
 $     P_{\mathcal{M}}(y_j\mid \mathcal{E}_n,\boldsymbol{x_{\text{test}}}). $                     
Then, the ultimate predicted label $\hat{y}_{\text{test}}$ is given by the candidate answer with the highest probability:
                 $  \hat{y}_{\text{test}}= \operatorname{LLM}(\mathcal{E}_n, \boldsymbol{x}_{\text{test}})  = y_{\arg\max_{j}P_{\mathcal{M}}(y_j|\mathcal{E}_n,\boldsymbol{x_{\text{test}}})}.
                   $                      
For example, we could predict the class
label in a binary sentiment classification by comparing the prediction probability of the two labels: 0 and 1. 
The following are some characteristics which make ICL an important form of learning method: 
     without optimizing any parameters, ICL directly performs predictions on the pretrained language models; 
    by altering the demonstration and templates, it is easier to incorporate human knowledge into LLMs \citep{Wei2022chain};
     ICL is a training-free learning framework and can be easily applied to large-scale real-world tasks.
%\end{itemize}
   
    % The ICL tasks can be categorized into task recognition (TR)  and task learning (TL). TR recognizes the task from demonstrations and applies LLMs, pre-trained priors, and TL learns a new input-label mapping from demonstrations.

In this paper, we work with ICL for \emph{large} language models, which can override semantic priors from pretraining when presented with in-context demonstrations that contradict priors \citep{wei2023larger}. It performs both \emph{task recognition} for identifying tasks and \emph{task learning} for learning new input-label mappings from demonstrations  \citep{Pan2023context}.  So its performance improves consistently with more demonstrations.

%$$$$$$$$$$$$$$$$$$$$$$$$$$$$$$

\section{Locally Differentially Private ICL} \label{sec:main}

This section is our main contribution. First we adapt some previous results in the literature and obtain a formula for the prediction probability of the query answer in ICL by considering ICL as a dual form of gradient-descent-based optimization. 
Next we formulate locally differentially private ICL and use the above prediction formula to analyze the trade-off between privacy-preservation and prediction accuracy.  After that, we design an algorithm with LDP-ICL for the discrete distribution estimation problem and compare with the standard Warner's mechanism \citep{Warner1965randomized}.

%*************************************
\subsection{In-context Learning by Gradient Descent}

 Mathematically,  let $\boldsymbol{W}_0, \Delta \boldsymbol{W} \in \mathbb{R}^{N\times M }$ be the initialized weight and update matrix for a given classification task, respectively, where update is performed across few demonstrations: $\mathcal{E}_n=\{\left(\boldsymbol{x_i},\boldsymbol{y_i}\right)\}_{i=1}^n$ comprising input representations $\boldsymbol{x_i} \in \mathbb{R}^{N}$ and corresponding labels $\boldsymbol{y_i} \in \mathcal{Y} = \{y_1, \cdots, y_M\} \subseteq \mathbb{R}^{M}$. 
$\boldsymbol{W}_0 \boldsymbol{x}_{\text{test}}$ is the answer of the zero-shot learning, i.e., ICL with \emph{no demonstrations} and hence serves as an important reference.  In this paper, we will fix this formalization. 
According to \citep{pmlr-v162-irie22a}, transformer attention has a dual
form of gradient descent.
Gradient descent use back-propagation algorithm to calculate $\Delta \boldsymbol{W}$, by summing the outer products of $\{\boldsymbol{x_i}\}_{i=1}^n$ with their corresponding error signals $\mathbf{e}_i \in \mathbb{R}^{N \times M}$
\begin{align}
\Delta \boldsymbol{W}=\sum\limits_{i=1}^{n}\mathbf{e}_i\otimes\boldsymbol{x}_i= \sum\limits_{i=1}^{n}\mathbf{e}_i\boldsymbol{x}_i^T \label{dw}   
\end{align}

Then, given a specific query $\boldsymbol{x_{\text{test}}}$, we obtain its prediction
\begin{align}
    \hat{\boldsymbol{y}}_{\text{test}} = \left(\boldsymbol{W}_0+\Delta \boldsymbol{W}\right)\boldsymbol{x_{\text{test}}} \label{pre}
\end{align}

Combining (\ref{dw}) and (\ref{pre}), we derive the dual form of gradient decent
\begin{align*}
\hat{\boldsymbol{y}}_{\text{test}}&=\boldsymbol{W}_0\boldsymbol{x_{\text{test}}}+\sum\limits_{i=1}^{n}\mathbf{e}_i\boldsymbol{x}_i^T\boldsymbol{x_{\text{test}}}\\ &=\boldsymbol{W}_0\boldsymbol{x_{\text{test}}}+\sum\limits_{i=1}^{n}\mathbf{e}_i\left(\boldsymbol{x}_i^T\boldsymbol{x_{\text{test}}}\right) \\ 
&=\boldsymbol{W}_0\boldsymbol{x_{\text{test}}}+\text{LinearAttn}\left(\boldsymbol{E},\boldsymbol{X},\boldsymbol{x_{\text{test}}}\right)
\end{align*}
where Linear Attention operation is performed over error signal matrix $\boldsymbol{E}$, demonstration set $\boldsymbol{X}$ and query $\boldsymbol{x_{\text{test}}}$ representing values, keys and query, respectively.

% We now illustrate that our ICL for binary classification tasks can be realized through self-attention mechanism \emph{followed by an activation function such as softmax or sigmoid}, also interpretable as an implicit gradient descent step on the binary cross-entropy(BCE) loss.
We now illustrate that our ICL for classification tasks can be realized through self-attention mechanism \emph{followed by an activation function such as softmax or sigmoid}, also interpretable as an implicit gradient descent step on the cross-entropy(CE) loss. For simplicity and illustration, we focus on binary classification, where the 2-dimensional one-hot label vector can be treated as a real number in $\{0,1\}$.
Given demonstrations: $\mathcal{E}_n=\{\left(\boldsymbol{x_i},y_i\right)\}_{i=1}^n, y_i \in \{0,1\}$, the binary CE loss measures the dissimilarity between the predicted probability with the true binary labels
\begin{small}
\[
L\left(\boldsymbol{W}\right) = \sum\limits_{i=1}^n\left[y_i\ln\left[\sigma\left(\boldsymbol{W}\boldsymbol{x_i}\right)\right] + \left(1-y_i\right)\ln\left[1-\sigma\left(\boldsymbol{W}\boldsymbol{x_i}\right)\right]\right]
\]
\end{small}
where $\sigma(z) \triangleq \frac{1}{1 + e^{-z}} $ denotes sigmoid function and $\boldsymbol{W} \in \mathbb{R}^{N}$ is weight matrix.
Applying a single gradient descent iteration to the loss function $L$ with learning rate $\eta$ yields the weight change
\begin{small}
\begin{align}
    \Delta \boldsymbol{W} = -\eta \nabla_{\boldsymbol{W}} L\left(\boldsymbol{W}\right) = -\eta\sum\limits_{i=1}^n \left(\sigma\left(\boldsymbol{W}\boldsymbol{x_i}\right) - y_i\right) \boldsymbol{x_i}^T \label{delta_W}
\end{align}
\end{small}
Let $\hat{p}_{\text{test}}$ be the prediction probability of the true label for the query $\boldsymbol{x}_{\text{text}}$ in the zero-shot learning. 
Consequently, this alteration in weights will result in an update in the prediction $\hat{p}_{\text{test}}$ for query $\boldsymbol{x_{\text{test}}}$
\begin{small}
\begin{align}
   \begin{pmatrix}
\boldsymbol{x_{\text{test}}} \\ 
\hat{p}_{\text{test}}
\end{pmatrix}
&=
\begin{pmatrix}
\boldsymbol{x_{\text{test}}} \\ \sigma\left(\boldsymbol{W}_0\boldsymbol{x_{\text{test}}}\right)
\end{pmatrix} \nonumber \\
\leftarrow
\begin{pmatrix}
\boldsymbol{x_{\text{test}}} \\ \sigma\left(\left(\boldsymbol{W}_0+\Delta \boldsymbol{W}\right)\boldsymbol{x_{\text{test}}}\right)
\end{pmatrix}
&=
\begin{pmatrix}
\boldsymbol{x_{\text{test}}} \\ \sigma\left(\boldsymbol{W}_0\boldsymbol{x_{\text{test}}}+\Delta \boldsymbol{W}\boldsymbol{x_{\text{test}}}\right) \label{prediction}
\end{pmatrix}
\end{align}
\end{small}

Combining (\ref{delta_W}) and (\ref{prediction}), we rewrite the updated prediction as 
\begin{small}
\begin{align}
     \hat{p}_{\text{test}}^{\text{(u)}} & \triangleq \sigma\left(\boldsymbol{W}_{0}\boldsymbol{x_{\text{test}}}+\Delta \boldsymbol{W} \boldsymbol{x_{\text{test}}}\right) \nonumber \\
    & = \sigma\left(\boldsymbol{W}_0x_{\text{test}}-\eta\sum_{i=1}^{n}\left(\sigma\left(\boldsymbol{W}_0\boldsymbol{x_i}\right)-y_{i}\right)\boldsymbol{x_i}^{T}\cdot\boldsymbol{x_{\text{test}}}\right) \label{eq:GD}
\end{align}
\end{small}
Note that the gradient-descent step is performed on the inner transformer attention mechanism. Following \citep{Zhmoginov2022hypertransformer}, self-attention mechanism can emulate gradient descent on a classification task:

\begin{proposition} Given previous token: $\mathcal{E}_n=\{\left(\boldsymbol{x_i},y_i\right)\}_{i=1}^n$, we can construct key, query and value matrices $\boldsymbol{W_k}$, $\boldsymbol{W_q}$, $\boldsymbol{W_v}$ as well as the projection matrix $\boldsymbol{P}$ such that a 1-head linear attention operation on the matrix $X:= [\mathcal{E}_n, \boldsymbol{x_{\operatorname{test}}}]$ followed by sigmoid(or softmax)
 yields the same results $\hat{p}_{\operatorname{test}}^{\operatorname{(u)}}$ as induced by gradient descent
 \begin{small}
$$
\hat{p}_{\operatorname{test}}^{\operatorname{(u)}}  = \sigma(\boldsymbol{W}_0 \boldsymbol{x}_{\text{test}} + \boldsymbol{P}\operatorname{LinearAttn}(\boldsymbol{W_v} X,\boldsymbol{W_k} X,\boldsymbol{W_q} X))
$$
\end{small}
%= \sigma(\hat{y}_{\operatorname{test}} + \Delta \boldsymbol{W}\boldsymbol{x_{\operatorname{test}}})
\end{proposition}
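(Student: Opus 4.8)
The plan is to prove the Proposition constructively: exhibit the matrices $\boldsymbol{W_k},\boldsymbol{W_q},\boldsymbol{W_v},\boldsymbol{P}$ explicitly and then verify the displayed identity by direct substitution into Eq.~(\ref{eq:GD}). Recall that (unnormalized) linear attention acts columnwise, $\operatorname{LinearAttn}(V,K,Q)_{:,j}=\sum_{i}V_{:,i}\,\langle K_{:,i},Q_{:,j}\rangle = V K^{T} Q_{:,j}$. Hence, to reproduce the gradient step $\Delta\boldsymbol{W}\,\boldsymbol{x_{\text{test}}}=-\eta\sum_{i=1}^{n}\bigl(\sigma(\boldsymbol{W}_0\boldsymbol{x_i})-y_i\bigr)\,\boldsymbol{x_i}^{T}\boldsymbol{x_{\text{test}}}$ obtained from Eq.~(\ref{delta_W}), it suffices to arrange that, at the query column, the keys carry the inputs $\boldsymbol{x_i}$, the query carries $\boldsymbol{x_{\text{test}}}$, and the values carry the scaled error signals $-\eta\bigl(\sigma(\boldsymbol{W}_0\boldsymbol{x_i})-y_i\bigr)$; the key--query inner products then supply the factors $\boldsymbol{x_i}^{T}\boldsymbol{x_{\text{test}}}$ and the attention sum reconstructs $\Delta\boldsymbol{W}\,\boldsymbol{x_{\text{test}}}$ exactly.

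Concretely, I would represent each demonstration token as the stacked vector $\boldsymbol{z_i}=(\boldsymbol{x_i};\,\sigma(\boldsymbol{W}_0\boldsymbol{x_i});\,y_i)\in\mathbb{R}^{N+2}$ and the query token as $\boldsymbol{z_{\text{test}}}=(\boldsymbol{x_{\text{test}}};\,\sigma(\boldsymbol{W}_0\boldsymbol{x_{\text{test}}});\,\sigma(\boldsymbol{W}_0\boldsymbol{x_{\text{test}}}))$, the last two coordinates deliberately made equal so that the query's self-contribution to the attention sum vanishes; the coordinate $\sigma(\boldsymbol{W}_0\boldsymbol{x_i})$ is the zero-shot readout and is taken to be available in the embedding, exactly as in the hypertransformer construction of \citep{Zhmoginov2022hypertransformer} (in the canonical case $\boldsymbol{W}_0=0$ it is merely the constant $\tfrac12$, and a single fixed bias coordinate suffices). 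Then I take $\boldsymbol{W_k}$ and $\boldsymbol{W_q}$ to be the projection onto the first $N$ coordinates, so $\boldsymbol{W_k}X$ and $\boldsymbol{W_q}X$ have columns $\boldsymbol{x_i}$ (and $\boldsymbol{x_{\text{test}}}$); I take $\boldsymbol{W_v}=(\boldsymbol{0}_{1\times N},\,-\eta,\,+\eta)$, so that $\boldsymbol{W_v}\boldsymbol{z_i}=-\eta(\sigma(\boldsymbol{W}_0\boldsymbol{x_i})-y_i)$ while $\boldsymbol{W_v}\boldsymbol{z_{\text{test}}}=0$; and I take $\boldsymbol{P}$ to be the $1\times1$ identity (or, in the full Transformer layout, the matrix that writes this scalar into the prediction coordinate of the query token). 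With these choices the value of $\operatorname{LinearAttn}(\boldsymbol{W_v}X,\boldsymbol{W_k}X,\boldsymbol{W_q}X)$ at the query column equals $\sum_{i=1}^{n}-\eta(\sigma(\boldsymbol{W}_0\boldsymbol{x_i})-y_i)(\boldsymbol{x_i}^{T}\boldsymbol{x_{\text{test}}})=\Delta\boldsymbol{W}\,\boldsymbol{x_{\text{test}}}$, and adding $\boldsymbol{W}_0\boldsymbol{x_{\text{test}}}$ and applying $\sigma$ gives precisely the right-hand side of Eq.~(\ref{eq:GD}), i.e.\ $\hat{p}_{\text{test}}^{(u)}$, which is the claim.

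For the general $M$-class/softmax case the argument is identical after replacing the scalar label and prediction slots by their $M$-dimensional one-hot and softmax counterparts and $\boldsymbol{W}\in\mathbb{R}^{N\times M}$: the cross-entropy gradient retains the outer-product form $\sum_i(\operatorname{softmax}(\boldsymbol{W}_0\boldsymbol{x_i})-\boldsymbol{y_i})\,\boldsymbol{x_i}^{T}$, so the same block-structured $\boldsymbol{W_k},\boldsymbol{W_q},\boldsymbol{W_v}$ work verbatim with $\boldsymbol{P}$ now an $M\times M$ selection. The only genuinely non-formal point — and the step I expect to be the real obstacle — is the nonlinearity: a single linear-attention head cannot itself manufacture $\sigma(\boldsymbol{W}_0\boldsymbol{x_i})$ from $\boldsymbol{x_i}$, so the error signal must already be present in the token embedding, produced either by the embedding map or by a preceding sublayer evaluating the fixed zero-shot predictor. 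I would therefore state (and read) the Proposition, following \citep{Zhmoginov2022hypertransformer,von2023transformers,pmlr-v162-irie22a}, as a claim about one attention sublayer acting on suitably embedded tokens rather than about a raw head computing a sigmoid; granting that, everything else reduces to the routine verification of the block-matrix identity sketched above.
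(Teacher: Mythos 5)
Your construction is correct and follows the same basic strategy as the paper's proof---an explicit block construction of $\boldsymbol{W_k}$, $\boldsymbol{W_q}$, $\boldsymbol{W_v}$ and $\boldsymbol{P}$ so that the outer-product sum of a single attention head reproduces the gradient term $\Delta\boldsymbol{W}\,\boldsymbol{x_{\text{test}}}=-\eta\sum_{i}\bigl(\sigma(\boldsymbol{W}_0\boldsymbol{x_i})-y_i\bigr)\boldsymbol{x_i}^{T}\boldsymbol{x_{\text{test}}}$---but you resolve the one delicate point, the sigmoid nonlinearity, differently. The paper keeps the token as $(\boldsymbol{x_i};y_i)$, applies $\boldsymbol{W}_V=\mathrm{diag}(\boldsymbol{W}_0,I)$, and then inserts a \emph{nonlinear} operator $\sigma^-$ (``a sigmoid function followed by a subtraction'') into the value pathway to form the error signal $\sigma(\boldsymbol{W}_0\boldsymbol{x_i})-y_i$; the final $\boldsymbol{W_v}$ is only said to be ``constructed from $\sigma^-$ and $\boldsymbol{W}_V$,'' so the head is not literally linear in the values. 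You instead precompute $\sigma(\boldsymbol{W}_0\boldsymbol{x_i})$ into the token embedding, which makes $\boldsymbol{W_v}=(\boldsymbol{0}_{1\times N},-\eta,+\eta)$ a genuinely linear map and lets you take $\boldsymbol{P}$ to be the identity; the price is the assumption (which you state explicitly, and which is also implicit in the hypertransformer construction both arguments rely on) that the zero-shot readout is available in the embedding or produced by a preceding sublayer. Two further points in your favour: making the last two coordinates of the query token equal cleanly annihilates its self-contribution to the attention sum, an issue the paper sidesteps only by summing over $i\le n$ rather than over all $n+1$ columns; and your signs come out right, whereas the paper's displayed computation with $P=\eta I$ and value $\sigma(\boldsymbol{W}_0\boldsymbol{x_i})-y_i$ evaluates to $-\Delta\boldsymbol{W}\,\boldsymbol{x_{\text{test}}}$ when $+\Delta\boldsymbol{W}\,\boldsymbol{x_{\text{test}}}$ is what Eq.~(\ref{eq:GD}) requires, a sign that would have to be absorbed into $\boldsymbol{P}$ or $\sigma^-$. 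Your extension to the $M$-class softmax case is a straightforward and valid generalization that the paper does not spell out.
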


%The crucial part of the proof is to choose
%\[
%\boldsymbol{W}_k = \boldsymbol{W}_q =
%\left[
%  \begin{array}{cc}
%     I & 0   \\
%    0 & 0   \\
% \end{array}
%\right],
%\boldsymbol{W}_v= 
%\begin{bmatrix}
%         \boldsymbol{W}_0 & 0 \\
%          0 & I 
 %        \end{bmatrix}
%\]
%where $I$ is the identity matrix. 

\begin{proof}
    For simplicity, we don't specify the sizes of different matrices. The context will determine the sizes.  

We define matrix $P$ and  operator $\sigma^-$  by 

\[ P = \eta I, \text{ and }
\sigma^-(  
\left[
  \begin{array}{c}
    \boldsymbol{W}_0 x_i   \\
     y_i  \\
 \end{array}
\right] )
=
\begin{bmatrix}
          0 \\
          \sigma(\boldsymbol{W}_0 x_i)-y_i
         \end{bmatrix}
\]

where $I$ is the identity matrix. Note that $\sigma^-$ is just a sigmoid function followed by a subtraction. Define
\[
\boldsymbol{W}_K = \boldsymbol{W}_Q =
\left[
  \begin{array}{cc}
     I & 0   \\
    0 & 0   \\
 \end{array}
\right],
\boldsymbol{W}_V= 
\begin{bmatrix}
         \boldsymbol{W}_0 & 0 \\
          0 & I 
         \end{bmatrix}
\]

Consider
\begin{align*}
P = \sum_{i=1}^n \Bigg[ &\sigma^-\left( \left[
  \begin{array}{cc}
     \boldsymbol{W}_0 &  0\\
   0  &  I\\
     \end{array}
\right]
\begin{bmatrix}
           x_i \\
           y_i 
         \end{bmatrix} \right) \otimes \\
&\left( \begin{bmatrix}
           I & 0 \\
           0 & 0 
         \end{bmatrix}
         \begin{bmatrix}
           x_i \\
           y_i 
         \end{bmatrix} \right) \Bigg] \cdot
    \left( \begin{bmatrix}
           I & 0 \\
           0 & 0 
         \end{bmatrix}
         \begin{bmatrix}
           x_{\text{test}} \\
           p_{\text{test}} 
         \end{bmatrix} \right)
\end{align*}

We can compute the above expression and obtain that it is equal to $\eta \sum_{i=1}^n (\sigma(\boldsymbol{W}_0 x_i)- y_i)) \boldsymbol{x_i}^T \boldsymbol{x}_{\text{test}}$, which is just $-\Delta \boldsymbol{W} x_{\text{test}}$.  Then the matrices $\boldsymbol{W}_{\boldsymbol{v}}, \boldsymbol{W}_{\boldsymbol{k}}$ and $\boldsymbol{W}_{\boldsymbol{q}}$ can be constructed from $\sigma^-$ and the above matrices $\boldsymbol{W}_V$, $\boldsymbol{W}_K$ and $\boldsymbol{W}_Q$  respectively. 

\end{proof}

This proposition can explain well why the performance of ICL for classification improves with more demonstrations with \emph{true} labels. 
We see from the formula Eq. (\ref{eq:GD}) that any demonstration with true label will increase the prediction probability of the query $\boldsymbol{x_{\text{test}}}$  and any exemplary with false label will decrease the prediction probability.  For the binary classification, when $\boldsymbol{x}_i^T \boldsymbol{x}_{\text{test}} > 0$ and $y_i=1$, then  $\left(\sigma\left(\boldsymbol{W}_0\boldsymbol{x_i}\right)-y_{i}\right)\boldsymbol{x_i}^{T}\cdot\boldsymbol{x_{\text{test}}}) < 0$.  Since $\sigma$ is an increasing function, the demonstration $(\boldsymbol{x}_i, y_i)$ contribute to increase the prediction probability. Other cases can be analyzed similarly. 
This may explain well the emergent ability of \emph{task learning} of the LLMs,  especially the increasing ability of the in-context learning with more exemplaries  with true labels.

In the following, we will use Eq. (\ref{eq:GD}) to formulate our following framework for ICL with local differential privacy. 

%&&&&&&&&&&&&&&&&&&&&&&&&&&&&&&&77&&&&&&&&&&77
\subsection{Locally Differentially Private ICL}

Here we describe a threat model and emphasize the importance of local differential privacy in the preservation of individual privacy.  In this model, an organization owns a fully private database for some specific task (for example, presidential voting data, the school students' health records) and  hosts large language models (LLMs) via an API
endpoint, allowing users to query the LLM for answers based on the private data.  Sometimes, we assume that LLMs are frozen without any update of parameters \citep{Panda2023differentially,Duan2023flocks}.  We know that, in this scenario,  privacy leakage occurs under a canonical private attack called membership inference attack (MIA) which assesses whether a data point is used in the prompts appended to the inputs of a trained LLM \citep{Duan2023flocks}. The above formula Eq. (\ref{eq:GD}) explains well the membership inference attack. Given a query $q=(\boldsymbol{x_{\text{test}}}, \underline{?})$, an adversary tries to  distinguish whether it is within an demonstration set $\mathcal{E}_n =\{(\boldsymbol{x_1}, y_1), \cdots, (\boldsymbol{x_n}, y_n)\}$.  Without loss of generality, we assume that $\boldsymbol{x_{\text{test}}} = \boldsymbol{x_1}\in \mathcal{X}_n$ where $\mathcal{X}_n = \{\boldsymbol{x}: (\boldsymbol{x},y)\in \mathcal{E}_n \text{ for some } y \}$, i.e., $\boldsymbol{x_{\text{test}}}$ is within the demonstration set $\mathcal{E}_n$.  Let $\mathcal{E}_n':= \mathcal{E}_n \setminus \{(\boldsymbol{x_1}, y_1) \}\cup \{(\boldsymbol{x}', y')\}$ where $(\boldsymbol{x}',y') \not \in \mathcal{E}_n$. 
In particular, Since $\boldsymbol{x_{\text{test}}} \neq \boldsymbol{x}'$, the similarity $\boldsymbol{x_{\text{test}}^T} \boldsymbol{x_{\text{test}}} $ is usually much bigger than the similarity $\boldsymbol{x_{\text{test}}^T} \boldsymbol{x}'$. It follows from Eq. (\ref{eq:GD}) that $P(y_{\text{test}} | \mathcal{E}_n, \boldsymbol{x_{\text{test}}}) > P(y_{\text{test}} | \mathcal{E}_n', \boldsymbol{x_{\text{test}}})$. In other words, the prediction probability of the answer $y_{\text{test}}$ to the  query $\boldsymbol{x_{\text{test}}}$ in the in-context learning with the demonstration set $\mathcal{E}_n$ should be larger than the prediction of the answer to the same query with the demonstration set $\mathcal{E}_n'$. So the adversary may easily use the query $\boldsymbol{x_{\text{test}}}$ to distinguish $\mathcal{E}_n$ and $\mathcal{E}_n'$ especially when $n$ is small, and hence distinguish between membership and non-membership.  However, when $n$ gets larger (say 32), the difference between these two probabilities $P(y_{\text{test}} | \mathcal{E}_n, \boldsymbol{x_{\text{test}}}) $ and $P(y_{\text{test}} | \mathcal{E}_n', \boldsymbol{x_{\text{test}}})$ is relatively very small. Then it is not easy to distinguish between membership and non-membership.

Moreover, LLMs are shown to memorize individual data from the original training data and to retain users' data from smaller private datasets used to fine-tune
them for downstream tasks \citep{Mireshghallah2022memorization,Zhang2021counterfactual,Ippolito2022preventing,Mccoy2023much}. If prompts contain sensitive information, the LLM might expose privacy during queries as in Samsung privacy leakage \citep{Mitchell2023Samsung,Duan2023flocks}.   In this case, we may regard LLMs as an \emph{untrusted} aggregator in in-context learning.

Our goal is to employ LLMs to provide accurate answers to different queries from users but protect the privacy of individual data in the database.  In this paper, we focus on the scenarios where the LLMs are untrusted in privacy and the \emph{labels} in in-context learning are sensitive information.  We employ local differential privacy for these settings.  Let $\mathcal{A}$ be the \emph{private} input alphabet set and $\mathcal{O}$ be a finite  output alphabet set.  

\begin{definition} A randomized mechanism $Q$ from $\mathcal{A}$ to $\mathcal{O}$ is called  \emph{$\epsilon$-locally differentially private} if, for any two inputs $a$ and $a'$, and any output event $O \subseteq \mathcal{O}$, the following inequalities holds
\begin{align}
       Q(O |a) \leq e^{\epsilon} Q(O | a') \label{def:LDP}
\end{align}
\end{definition}
It formulates the privacy requirement that, by observing the same outcome $O$, an adversary cannot reliably distinguish whether the conditioned input is $a$ or $a'$. So the privacy in the input alphabet is preserved. When the privacy index $\epsilon$ is smaller, it is more difficult for the adversary to tell the two inputs apart and hence the privacy-preserving is better.   Differential privacy (central or local) satisfies two important properties that are crucial for the practical uses of DP mechanisms. The first one is called \emph{composition}.  Multiple DP mechanisms can be adaptively composed and applied to the same dataset.  The second is called \emph{postprocessing}. If a mechanism is differentially private, then any post-processing applied to the output of that mechanism is also differentially private.

In particular,  we use the $k$-ary randomized response mechanism ($k$-RR for short) to protect the privacy with labels \citep{Kairouz2016discrete,Wang2017locally}. Let $\mathcal{Y} = \{y_1, \cdots, y_M\}$ be the label set  and $\mathcal{E}_n:=\{(\boldsymbol{x_1}, y_1), \cdots, (\boldsymbol{x_n}, y_n)\}$ be the set of demonstration examples where $y_1, \cdots, y_n \in \mathcal{Y}$. The $k$\emph{-ary randomized response} on the label set $\mathcal{Y}$ is a randomized mechanism which maps $\mathcal{Y}$ stochastically to itself as follows: 
\begin{equation}
Q_{k\text{-RR}}(y'| y) = \frac{1}{M-1+e^{\epsilon}}\left\{
\begin{array}{rl}
e^{\epsilon} & \text{if } y' =y,\\
1 & \text{if } y' \neq y.
\end{array} \right. \label{k-RR} \nonumber
\end{equation}
Note that here $k=M$. In $\mathcal{E}_n$, $y_i$ is considered to be the \emph{true} label of the input $\boldsymbol{x_i}$ $(1\leq i \leq n)$. 
If we apply the $k$-RR mechanism $Q_{k\text{-RR}}$ to protect the privacy in labels, we obtain the perturbed demonstration set $Q_{k\text{-RR}} (\mathcal{E}_n) = \{(\boldsymbol{x_1}, Q_{k\text{-RR}}(y_1)), \cdots, (\boldsymbol{x_n}, Q_{k\text{-RR}}(y_n)\}$ where $Q_{k\text{-RR}}(y_i) (1\leq i \leq n)$ is a perturbation of the true label $y_i$. 
When $k=2$, the $k$-RR mechanism is just the well-known Warner's mechanism. In this paper, since we mainly focus on the binary classification problem in in-context learning, we use Warner's mechanism to protect the privacy in labels. For simplcity, we denote $\mathcal{Y}= \{0,1\}$. The obfuscation from Warner's mechanism is illustrated in Figure \ref{fig:Obfuscation}.

\begin{figure}[!htbp]
    \centering
    \includegraphics[width=0.7\linewidth]{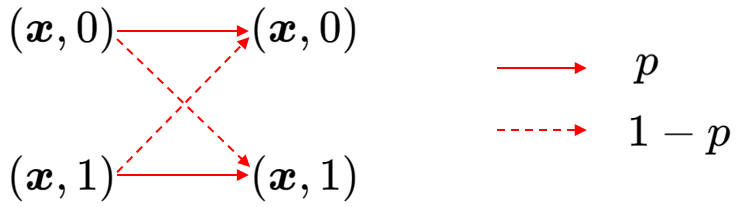}
    \caption{Obfuscation in labels, where $p=\frac{e^{\epsilon}}{e^{\epsilon}+1}$}.
    \label{fig:Obfuscation}
\end{figure}

Fix the demonstration set $\mathcal{E}_n:=\{(\boldsymbol{x_1}, y_1), \cdots, (\boldsymbol{x_n}, y_n)\}$. We apply the $\epsilon$-LDP $k\text{-RR}$ to $\mathcal{E}_n$ and obtain the perturbed demonstration set 
$Q_{k\text{-RR}}(\mathcal{E}_n)$ for the following ICL. 
 For any given query $\boldsymbol{x_{\text{test}}}$, we perform the ICL by querying the LLM GPT3.5 with the demonstration set $Q_{k\text{-RR}}(\mathcal{E}_n)$.  From the above analysis, we obtain the prediction probability of the true label $y_{\text{test}}$ for the query $\boldsymbol{x_{\text{test}}}$ as follows:
 \begin{small}
\begin{align}
   & P(y_{\text{test}} | Q_{k\text{-RR}} (\mathcal{E}_n), \boldsymbol{x_{\text{test}}}) \nonumber  \\  & = \sigma(\boldsymbol{W_0} \boldsymbol{x_{\text{test}}} - 
   \eta \sum_{i=1}^n ( \sigma(\boldsymbol{W_0 x_i})  
   - Q_{k\text{-RR}}(y_i)) \boldsymbol{x_i}^T \boldsymbol{x_{\text{test}}}) \label{eq:LDP-ICL}
\end{align}
\end{small}

The privacy of the labels is preserved from \emph{both} the untrusted LLMs and the observers of the query answers. By abuse of notion,  we use $Q^{(+\boldsymbol{x_{\text{test}}})}_{k\text{-RR}}(\mathcal{E}_n)$ for $\text{LLM}(Q_{k\text{-RR}}(\mathcal{E}_n), \boldsymbol{x_{\text{test}}})= \arg \max_{l\in \{0,1\}} P(\hat{y}_{\text{test}} = l | Q_{k\text{-RR}} (\mathcal{E}_n), \boldsymbol{x_{\text{test}}})$ to emphasize that the input of the randomized algorithm $Q^{(+\boldsymbol{x_{\text{test}}})}_{k\text{-RR}}$ is $\mathcal{E}_n$.  
The privacy of the true labels in the private set $\mathcal{E}_n$ is preserved by the randomized mechanism $Q_{k\text{-RR}}$. Indeed, given a prediction output probability for the label 1, the untrusted LLMs cannot reliably tell in the perturbed demonstration set $Q_{k\text{-RR}}(\mathcal{E}_n)$ which one of  0 and 1 is the true label of the input $\boldsymbol{x_i} (1\leq i \leq n)$.  For example,  although we may know that $(\boldsymbol{x}, 0)$ is in $Q_{k\text{-RR}}(\mathcal{E}_n)$, the LLM cannot be certain of the true label of $\boldsymbol{x}$ in the private demonstration set $\mathcal{E}_n$ because $(\boldsymbol{x}, 0)$ can be the output of both $(\boldsymbol{x},0)$ and $(\boldsymbol{x},1)$ under the randomized mechanism $Q_{k\text{-RR}}$ (Figure \ref{fig:Obfuscation}).  Let $\mathcal{E}_n^-$ be $\mathcal{E}_n \setminus \{(\boldsymbol{x_1}, y_1)\} \cup \{(\boldsymbol{x_1}, 1-y_1)\}$. In other words, $\mathcal{E}_n^-$ is obtained from $\mathcal{E}_n$ by flipping the label of the first element.  For any label $l\in \{0,1\}$, 
$ e^{-\epsilon} P( Q^{(+\boldsymbol{x_{\text{test}}})}_{k\text{-RR}}(\mathcal{E}_n) =l) \leq  P( Q^{(+\boldsymbol{x_{\text{test}}})}_{k\text{-RR}}(\mathcal{E}_n^-) =l) \leq e^{\epsilon} P( Q^{(+\boldsymbol{x_{\text{test}}})}_{k\text{-RR}}(\mathcal{E}_n) =l)$.  In this sense, the privacy in the true label is preserved by our $Q_{k\text{-RR}}$ against any observer of the query outcomes.

We propose LDP-ICL (Algorithm \ref{alg:ldp-icl-classification}), a new framework for protecting  private in-context learning demonstration examples. We randomly sample $n$ demonstration examples from the private dataset $\mathcal{D}$, perturb their labels, and then send these perturbed samples concatenated with the query to a LLM to predict the answer.  The algorithm is illustrated in Figure \ref{fig:main}. 

 % Algorithm 1: LDP-ICL for classification
 \begin{algorithm}
 \caption{\textbf{LDP-ICL}}
 \label{alg:ldp-icl-classification}
 \begin{algorithmic}[1]
 \renewcommand{\algorithmicrequire}{\textbf{Input:}}
\renewcommand{\algorithmicensure}{\textbf{Output:}}
\Require  Private data $\mathcal{D}$, query $q$, model \textbf{LLM}, privacy budget $\epsilon$, number of demonstration examples $n$.
\Ensure Model prediction $O(q)$
% \State \textbf{Partition} $\mathcal{D}$ based on their labels: $\mathcal{D}^1, \dots, \mathcal{D}^N \leftarrow \mathcal{D}$
% \State $\mathcal{D}_n = \emptyset$
%     \For { $i \in \{1,\dots,N\}$}
%         \State Sample $\mathcal{D}^{\text{isample}}$ from $\mathcal{D}^i$ of size $n/N$
%         \State $\mathcal{D}_{n}= \mathcal{D}_{n} \cup \mathcal{D}^{\text{isample}}$
%     \EndFor
    \State \textbf{Subsample} of size $n$ \textbf{from} $\mathcal{D}$ and obtain $\mathcal{E}_n$
    \State \textbf{Perturb} $\mathcal{E}_{n}$ using $k$-RR and obtain $Q_{k\text{-RR}}(\mathcal{E}_{n})$
    \State \textbf{Concatenate} query and form $I(q) = Q_{k\text{-RR}}(\mathcal{E}_{n}) \cup q$
    \State Obtain model output $O(q) = \textbf{LLM}(I(q))$
    \end{algorithmic}
    \end{algorithm}

There is a \emph{trade-off} between the privacy and utility in LDP-ICL, which is characterized by  the above formula (Eq. (\ref{eq:LDP-ICL})).  From this formula, we can theoretically compute the expected prediction probability of $Q_{k\text{-RR}}^{(+ \boldsymbol{x_{\text{test}}})} (\mathcal{E}_n)$ and its variance. In the private $\mathcal{E}_n$, when more (true) labels $y_i$ are flipped,  the terms $\sigma(\boldsymbol{W_0} \boldsymbol{x_i}) - y_i$ will flip the signs and hence the prediction output probabilities will decrease. When all the labels are flipped, the probability is the smallest. This also implies that, when $\epsilon$ in the randomization mechanism $Q_{k\text{-RR}}$ gets close to 0, the accuracy rate will decrease with the output probability in expectation.   We run some experiments on several datasets with different $\epsilon$. The results show a trade-off between the privacy index $\epsilon$ and the prediction accuracy.  If we want a better privacy-preservation for the sensitive labels, then $\epsilon$ must get smaller and hence the prediction accuracy decreases.  The results are illustrated in Figure \ref{fig:classification}. The privacy cost accumulates with more demonstrations and more queries according to the composition property of DP.

%Now we show how our locally differentially private mechanisms resist MIA.  Note that, for the above $\mathcal{E}_k$ and $\mathcal{E}_k'$, there is only one difference in the first input-label pair: $\boldsymbol{x_{\text{test}}} = x_1 \in \mathcal{X}_{\mathcal{E}_k}$ but $x'\not \in \mathcal{X}_{\mathcal{E}_k}$. Now we define a function $F_1: \mathcal{D}\rightarrow [0,1]$ as follows:
%\begin{align}
%    F_1(x,y) =s(W_0 \boldsymbol{x_{\text{test}}} - 
%   \eta [(s(W_0 x)-y) x^T+ \\ \nonumber 
%   \sum_{i=2}^k ( s(W_0 x_i)) 
%   - Q_{\text{k-RR}}(y_i)) x_i^T] %\boldsymbol{x_{\text{test}}} 
%\end{align}
%Since $\boldsymbol{x_{\text{test}}} \in \mathcal{X}_{\mathcal{E}_k}$, the prediction probability for the membership is given by $F_1(x_1, y_1)$. Similarly, since $x' \not\in \mathcal{X}_{\mathcal{E}_k}$, the prediction probability for this non-membership is given by $F_1(x', y')$.  In order to protect against MIA, we require that 
%\begin{align}
%    e^{-\epsilon} F_1 (x', y') \leq  F_1(x_1, %y_1) \leq e^{\epsilon} F_1(x', y')
%\end{align}

%[DISCRETE DISTRIBUTION ESTIMATION]
%%%%%%%%%%%%%%%%%%%%%%%%%%%%%%%%%%

\subsection{Discrete Distribution Estimation}

Now we apply the above LDP-ICL to a touch-stone problem in DP: discrete distribution estimation problem. 
Assume that $\mathcal{D}$ is a given private database for some specific classification task (for example, the 2016 US Presidential Election Data) where each individual classification  label is sensitive.  Now we want to estimate the discrete distribution of different labels, i.e., the proportion of data points with each label in the whole database.  For simplicity and illustration, we choose the binary classification. Without loss of generality, let the label set be $\{0,1\}$, $\pi_0$ and $\pi_1$ be their unknown prior proportion in the database.  Now we select a finite set $\mathcal{D}_n$ of input-label pairs from the database $\mathcal{D}$ whose label distribution is the same as that of  the original dataset $\mathcal{D}$. Let $\mathcal{X}_{\mathcal{D}_n}: = \{\boldsymbol{x}: (\boldsymbol{x},y) \in \mathcal{D}_n \text{ for some } y\}$.     Now we use the above LDP-ICL to perturb the answer to each query from $\mathcal{X}_{\mathcal{D}_n}$.  We choose a demonstration set for each query and perturb the labels in the demonstration set  with LDP mechanism. With this noisy demonstration set for in-context learning, the answer of each query is also perturbed with a certain associated  probability without affecting the prediction accuracy much. 
The noisy answer is regarded as an privacy-preserving estimation of the true label. 
By collecting the noisy answers to all queries from $\mathcal{X}_{\mathcal{D}_n}$, we can estimate $\pi_1$ with local differential privacy. 

The crux of this approach is to choose the demonstration sets.  One possible solution is to choose \emph{a single} set of input-label pairs from $\mathcal{D}$  and its perturbed version as the demonstration set for \emph{all} queries from  $\mathcal{X}_{\mathcal{D}_n}$.  The problem with this approach is that each perturbed label in the demonstration set would expose to the untrusted LLM \emph{many times} (precisely $|\mathcal{D}_n|$ times) so that the adversary may estimate correctly the true labels in the demonstration set via LLM.  Our solution instead is to first partition $\mathcal{D}$ into different subsets of input-label pairs of relatively small size and try to pick up a different subset for each query.  The perturbed version of the subset is chosen as the demonstration set for the query.  In this way, we can avoid the possibility that a perturbed label might expose to LLM many times and hence we can estimate the proportion $\pi_1$ without leaking information about the true labels much.  Our approach is detailed in the following Algorithm \ref{algorithm:ldp-icl-estimation}.  Specifically, in line 1, we perform a random sampling of size $R$ from $\mathcal{D}$ without replacement to generate the query set. Notably, it is crucial to maintain a consistent proportion of  samples of labels 0 and 1 during the sampling process. In line 2, we split the original dataset into $l$ parts of demonstrations, each containing $n$ examples, where $l=|\mathcal{D}|/n$.   Finally, we obtain predicted answers for each query and calculate estimated positive rate
\begin{align}
    \widehat{P}_{t}  \triangleq \frac{\sum \limits_{i=1}^R\mathds{1}\left\{O(q^i)=1\right\}}{R} \label{eq.estimation}
\end{align}

%Since we will use $\mathcal{D}_n$for in-context learning, we assume that $n$ is small, say 32.  We aim to estimate $\pi_0$ and $\pi_1$ by in-context learning with $\mathcal{D}_n$ as the demonstration set. We assume that $LLMs$ are \emph{untrusted}.  

% Algorithm 2: LDP-ICL for distribution estimation
 \begin{algorithm}
 \caption{\textbf{LDP-ICL for distribution estimation}}
 \label{algorithm:ldp-icl-estimation}
 \begin{algorithmic}[1]
 \renewcommand{\algorithmicrequire}{\textbf{Input:}}
\renewcommand{\algorithmicensure}{\textbf{Output:}}
\Require  Private data $\mathcal{D}$, model \textbf{LLM}, privacy budget $\epsilon$, number of demonstration examples $n$, number of round(queries) $R$
\Ensure Proportion estimation

\State \textbf{Subsample} of size $R$ \textbf{from} $\mathcal{D}$, obtain $\{(\boldsymbol{x^i_{\text{test}}},y^i_{\text{test}})\}_{i=1}^{R}$ and construct queries $\{q^i\}_{i=1}^{R} = \{(\boldsymbol{x^i_{\text{test}}},\underline{?})\}_{i=1}^{R}$
\State \textbf{Partition} $\mathcal{D}$ into classes with size $n$:  $\mathcal{D}^1_n, \dots, \mathcal{D}^l_n \leftarrow \mathcal{D}$
    \For { $i \in \{1,\dots,R\}$}
       \State \textbf{Perturb} $\mathcal{D}_{n}^{i}$ using $k$-RR and obtain $Q_{k\text{-RR}}(\mathcal{D}_{n}^{i})$
        \State \textbf{Concatenate} corresponding query and form $I(q^i) = Q_{k\text{-RR}}(\mathcal{D}_{n}^{i}) \cup q^i$
        \State Obtain $i$-th model output $O(q^i) = \textbf{LLM}(I(q^i))$
    \EndFor
\State Calculate estimated rate (Eq.(\ref{eq.estimation}))
    
    \end{algorithmic}
    \end{algorithm}

For each query $\boldsymbol{x_{\text{test}}^i}$, since $(\boldsymbol{x_{\text{test}}^i}, y_{\text{test}}^i) \in \mathcal{D}$, $y_{\text{test}}^i$ can be regarded as the true label of $\boldsymbol{x_{\text{test}}^i}$.  So $\text{LLM}(Q_{k\text{-RR}}(\mathcal{D}_{n}^{i}), \boldsymbol{x_{\text{test}}^i})$ is a perturbation of the true label $y_{\text{test}}^i$ according to the LDP-ICL.
Now we compare this in-context-learning approach with the well-known Warner's method on the same distribution estimation problem.    According to Warner's method, for each sample $(\boldsymbol{x_{\text{test}}^i}, y_{\text{test}}^i)$, we flip the true label $ y_{\text{test}}^i$ according to $Q_{\text{2-RR}}$ (Warner's mechanism). By collecting the noisy labels, we empirically estimate the proportion of the labels in the original private database $\mathcal{D}$.  The main difference between these two methods is that the LDP-ICL approach perturbs directly the labels in the \emph{demonstration set} but the Warner's method add noise directly to the \emph{queries}.  Generally, the LDP-ICL approach performs better in the high-privacy region ( when $\epsilon$ is relatively small) because the semantic prior of the LLM add some extra power to the estimation.  The experimental results are shown in Figure \ref{fig:estimation}. 

%\begin{example} \label{example:Warner} A classic example for discrete distribution estimation is Warner's randomized response method for survey research \citep{Warner1965randomized}.  For the above  sampled $\mathcal{D}_n$, we apply the prototypical  Warner's randomized response mechanism $Q_W$, the true label of each input-label pair is flipped with probability $1-p$ where $p= \frac{e^{\epsilon}}{1+e^{\epsilon}}$. The perturbed  labels are distributed   according to $(q_1, q_2)= (\pi_1, 1-\pi_1) Q_W$ where $q_1$ is the probability for the answer 1.    So $q_1= \pi_1 p + (1-\pi_1)(1-p)$ and $q_2= \pi_1(1-p) + (1-\pi_1) p$.  We obtain  the maximum likelihood estimation of $\pi_1$ as 
%$\hat{\pi_1} = \frac{p-1}{2p-1} + \frac{n_1}{(p-1)n}$ where $n_1$ is the number of answers 1. It can be shown 
%that this distribution estimation $\hat{\pi}$ is unbiased and its mean square error or variance is the following formula:
%$Var[\hat{\pi}]  = \frac{-(\pi-\frac{1}{2})^2+ \frac{1}{4}}{n} + \frac{\frac{1}{4(2p-1)^2}-\frac{1}{4}}{n}$. 
%\end{example}

%[THEORETICAL RESULTS]

%$$$$$$$$$$$$$$$$$$$$$$$$$$$$$$$$$$$$$$$$$$$$$4$
\section{Experiments} \label{sec:experiments}
We provide empirical results to demonstrate the effectiveness of our proposed LDP-ICL under two scenarios, classification(Alg.\ref{alg:ldp-icl-classification}) and distribution estimation(Alg.\ref{algorithm:ldp-icl-estimation}).

\subsection{Experimental Setup}

% while aiming to answer the following questions: (1) Can LDP-ICL
\subsubsection{Datasets and Model}
We evaluate LDP-ICL using four binary classification datasets, all obtained from Hugging Face. SST-2\citep{DBLP:conf/emnlp/SocherPWCMNP13} and Subj\citep{pang-lee-2004-sentimental} are for sentiment classification; Ethos\citep{mollas_ethos_2022} is a hate speech detection dataset; and SMS\_Spam\citep{misc_sms_spam_collection_228} is used for recognizing spam text messages. Dataset size details are provided in Table \ref{table:dataset}. The training set of each task is considered as private data. Test samples are randomly selected from the validation set in the classification scenario, while they are drawn from the training set in label distribution estimation scenario. We choose the GPT-3.5-turbo model for all tasks which demonstrates strong performance across various natural language tasks while offering a balanced combination of performance and cost-effectiveness. 
\begin{table}[!htbp]
\centering
\resizebox{0.5\textwidth}{!}
{
\begin{tabular}{|c|c|c|c|c|c|}
\hline
Task & Training dataset & Validation dataset  \\
\hline
SST-2   & 67349    & 872          \\
\hline
Subj    & 8000    & 2000   \\     
\hline
Ethos    & 500(998)    & 498(-)   \\     
\hline
SMS\_Spam    & 4070(5570)    & 1500(-)   \\     
\hline
\end{tabular}
}
\caption{The dataset size for each task. In the classification scenario, we partitioned the initial Ethos dataset (consisting of 998 training examples) and SMS\_Spam dataset (comprising 5570 training examples) into separate training and validation sets.}
\label{table:dataset}
\end{table}

% To adhere to budget constraints, we utilized GPT-3.5 exclusively. While employing GPT-4 for all tasks would yield superior results, particularly in inference tasks (OpenAI, 2023a).

\subsubsection{Baselines}
In the classification scenario, we compare LDP-ICL with the following baselines: 
\begin{itemize}
    \item \textbf{Non-private ICL}, i.e., $\epsilon= \infty$ in LDP-ICL, is equivalent to a non-private $n$-shot prediction.
    \item \textbf{Zero-Shot Learning (ZSL)} is the same as one-shot learning except that no demonstrations are allowed, and the model is only given a natural language instruction describing the task. ZSL performance enhances with larger model sizes, demonstrating commendable outcomes in GPT-3 175B\citep{DBLP:journals/corr/abs-2005-14165}.
    \item \textbf{Flipped-Label ICL(FL-ICL)} is a pattern that flips all class labels in the exemplars, indicating a disagreement between semantic prior knowledge and input-label mappings. The performance accuracy is \emph{inversely} proportional to the ability to learn input–label mappings and override semantic priors \citep{wei2023larger}.

\end{itemize}

For distribution estimation, we compare LDP-ICL with simple \textbf{Coin Flipping(CF)} (Warner's) method that randomly flips the true lables in test queries.

\subsubsection{Implementation Details}
In our experiments, we employ a uniform template for structuring examples across all tasks. Additionally, to mitigate the influence of sensitive settings on ICL performance, we ensure that our demonstration examples meet some constraints. Specific templates for each task and constraints are provided in the Appendix B and C. By default, we set the number of demonstrations to $n = 32$. 
For classifying unknown test queries, we first tune the random seed for each task to find a set of demonstration examples that achieves the best validation performance based on ICL. Then, the same ordered demonstration set is used for fair LDP-ICL comparisons across various discrete budget levels($\epsilon=\{0,0.5,1,2, 3, 8,\infty\}$). Finally, we utilize evaluation metric to assess model prediction accuracy on a subset of 150 test examples from the validation set. 
% and use 2 different random seeds for each task. 
Evaluation performance is averaged over 6 runs under the same parameter configuration. In the distribution estimation scenario, we selected number of queries separately $R=1000$ for SST-2 dataset and $R=500$ for Ethos dataset.

% To mitigate the influence of sensitive settings on ICL performance, 
% % including prompting templates, input-label mappings, text distribution, and label space considerations 
% we ensure that our demonstration examples meet the following constraints\citep{min-etal-2022-rethinking}:
% \begin{itemize}
%     \item Following unified input and output format
%     \item Covering all values in the label space
%     \item Selecting from training dataset
%     \item Forming input-output pair
% \end{itemize}

\subsubsection{Evaluation Metrics}
In the classification scenario, we gauge our method's performance by measuring the accuracy between predicted answers with the true ones.
For estimating label distribution, we calculate estimated positive rate using Eq.(\ref{eq.estimation}).

\begin{figure*}[t]
    \centering
    \includegraphics[width=0.95\linewidth]{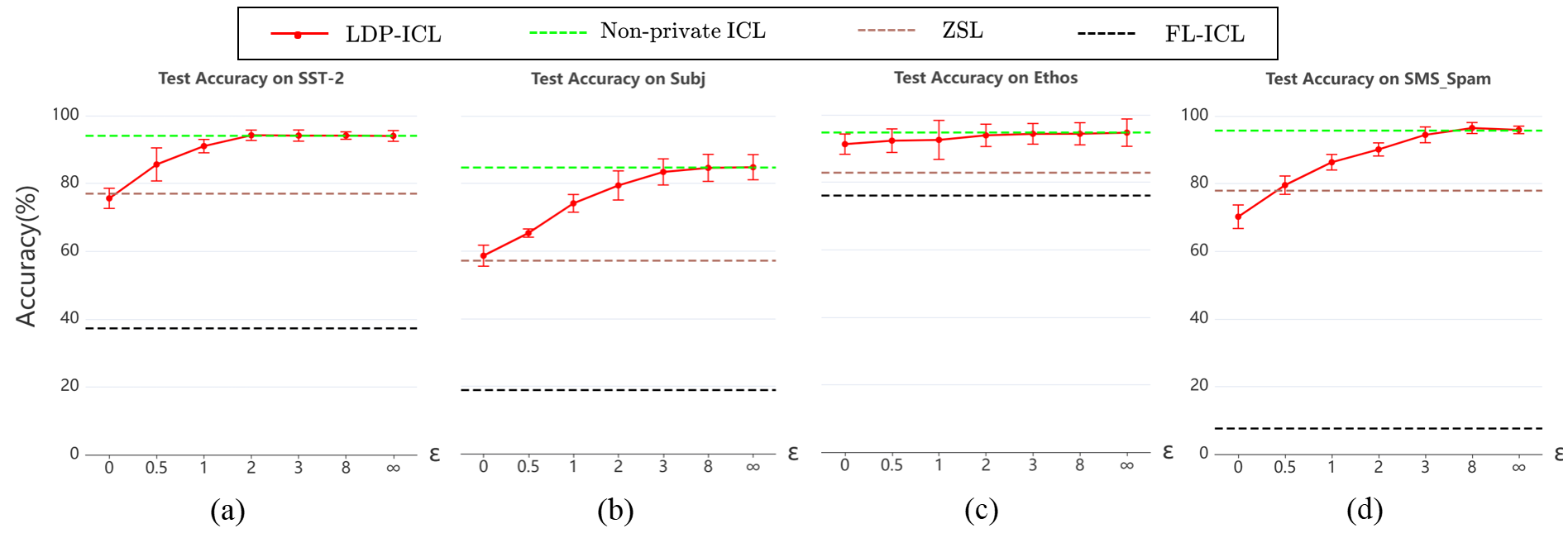}
    \caption{\textbf{Classification scenario:} Test performance on (a)SST-2, (b)Subj, (c)Ethos and (d) SMS\_Spam}.
    \label{fig:classification}
\end{figure*}

\subsection{Experimental Results} 
This part consists of LDP-ICLs for classification and distribution estimation. 
% In this section, we first present validation performance on two classification tasks across four settings: LDP-ICL, ZSL, ICL, and FL-ICL, where the latter three serves as baselines: FL-ICL(all class labels flipped); ZSL(i.e., when the LLM is only prompted with an instruction); Non-private ICL($\epsilon= \infty$, equivalent to a non-private $n$-shot prediction).
% Then, we conduct ablation studies to illustrate our intuition for selecting the number of demonstration examples. \textbf{(Finally, We propose the challenges ang advantages of LDP-ICL.)}

\subsubsection{LDP-ICL for classification}
Figure \ref{fig:classification} shows the performance of our LDP-ICL as well as three baselines in all tasks. As can be seen, the consistently lowest accuracy of the baseline FL-ICL method, falling below 50\% for all datasets(except Ethos), suggests vulnerability to perturbed class labels. This vulnerability is attributed to GPT-3.5-turbo's emergent task learning capability, enabling it to learn input-label mapping that override established semantic priors\citep{wei2023larger}.
Comparing against two additional baselines, namely the lower bound ZSL and the upper bound  ICL, LDP-ICL demonstrates significant enhancements over ZSL and achieves competitive results similar to non-private ICL when $\epsilon \geq 3$. This observation underscores the beneficial impact of the optimization performed by LDP-ICL on downstream tasks.
% Thus, when context learning and local differential privacy are combined, we can achieve privacy-preserving context-aware models. 
Furthermore, it's worth highlighting that with $\epsilon\geq8$, the privacy protection becomes almost negligible, leading to indistinguishable performance between LDP-ICL and the non-private ICL setting. Overall, reducing budgets $\epsilon$ strengthens privacy assurance in LDP-ICL but inevitably hampers downstream task performance. Specifically, at $\epsilon = 0$, half of the demonstration example class labels are inverted, yielding performance on par with ZSL in expectation. Conversely, at $\epsilon = \infty$, there is no privacy safeguard and our LDP-ICL degrades to ICL. A more detailed analysis of those tasks reveals that to attain or approach non-private ICL performance, a slightly different budget value is needed. This implies that downstream manufacturers should select the appropriate privacy protection parameter $\epsilon$ based on task-specific needs without losing much utility.

\subsubsection{LDP-ICL for distribution estimation}
Figure \ref{fig:estimation} presents a comparison between the performance of LDP-ICL and CF. The results reveal that our LDP-ICL estimation aligns more closely with the true proportion and maintains a higher level of stability especially in cases of smaller $\epsilon$, demonstrating better utility. Since we typically prefer smaller budget, which indicates stronger privacy, LDP-ICL outperforms CF in terms of utility.

% \begin{table}[!htbp]
% \centering

% \begin{tabular}{|c|cccccc|}
% \hline
% \multirow{2}{*}{Method} & \multicolumn{6}{c|}{Positive rate under each $\epsilon$(\%)}                                                                                                           \\ \cline{2-7} 
%                         & \multicolumn{1}{c|}{0}  & \multicolumn{1}{c|}{0.5} & \multicolumn{1}{c|}{1}  & \multicolumn{1}{c|}{2}  & \multicolumn{1}{c|}{3}  & 8  \\ \hline
% LDP-ICL                 & \multicolumn{1}{c|}{53.0} & \multicolumn{1}{c|}{54.1}  & \multicolumn{1}{c|}{54.8} & \multicolumn{1}{c|}{55.1} & \multicolumn{1}{c|}{55.1} & 55.1 \\ \hline
% CF                      & \multicolumn{1}{c|}{50.0} & \multicolumn{1}{c|}{51.4}  & \multicolumn{1}{c|}{52.7} & \multicolumn{1}{c|}{54.4} & \multicolumn{1}{c|}{55.2} & 55.8 \\ \hline
% \end{tabular}
% \caption{Proportion estimation on SST-2. The ground-truth positive rate in test queries is $55.8\%$.}
% \label{table:estimation}
% \end{table}

 \begin{figure}[!htbp]
    \centering
    \includegraphics[width=\linewidth]{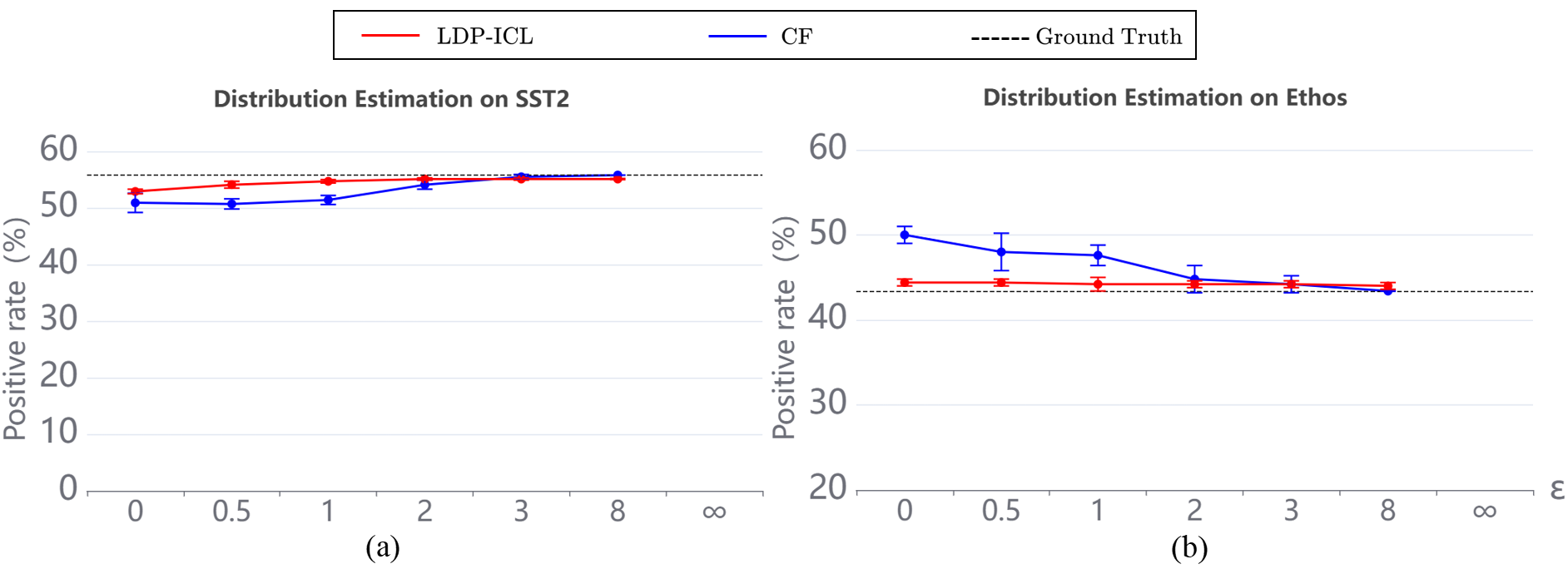}
    \caption{\textbf{Distribution estimation scenario:} Estimation results on (a)SST-2 and (b)Ethos.}.
    \label{fig:estimation}
\end{figure}
 Our initial
analysis indicates that for a given privacy parameter value, a higher quantity of examples leads to
a higher count of flipped examples, which implies a more powerful task-learning ability and hence
a less accurate prediction rate.
%&&&&&&&&&&&&&&&&&&&&&&&

\subsubsection{Comparison Experiments}

We have performed comparison experiments with other three representative privacy-preserving methods: DP-SGD, DP-ICL and PromptPATE and their comparison results are listed in Table \ref{tab:performance_comparison}.  The results have demonstrated that locally differentially private ICL also can reach the utility level of other privacy-preserving methods. 

% Please add the following required packages to your document preamble:

\begin{table}[]
\centering
\resizebox{0.5\textwidth}{!}
    {
\begin{tabular}{|c|c|c|c|c|}
\hline
Model                          & Method        & $\epsilon=3$    & $\epsilon=8$     & $\epsilon=\infty$  \\ \hline
\multirow{3}{*}{\begin{tabular}[c]{@{}l@{}}
             \enspace \thinspace RoBERTa\\ \quad \ \  -large
             \end{tabular} } & DP-SGD        & $93.04$ & $93.81$ & $96.2 $ \\ \cline{2-5} 
                               & DP-SGD        & $94.6$  & $94.7 $ & $95.5 $ \\ \cline{2-5} 
                               & DP-SGD        & $94.23$ & $94.87$ & $96.2$  \\ \hline
\begin{tabular}[c]{@{}l@{}}
             \enspace \thinspace RoBERTa\\ \quad \  -base
             \end{tabular}                   & promptPATE    & $86.35$ & $92.32$ & $93.23$ \\ \hline
\multirow{2}{*}{\begin{tabular}[c]{@{}l@{}}
             \enspace \thinspace GPT-3\\ Babbage
             \end{tabular} }  & DP-ICL($n=4$)   & $95.8$  & $95.92$ & $96.05$ \\ \cline{2-5} 
                               & DP-ICL($n=16$)  & $91.64$ & $96.32$ & $96.13$ \\ \hline
\multirow{2}{*}{\begin{tabular}[c]{@{}l@{}}
             \enspace \thinspace GPT-3.5\\  \quad \  Turbo
             \end{tabular} } & LDP-ICL($n=16$) & $94.45$ & $94.9$  & $95.77$ \\ \cline{2-5} 
                               & LDP-ICL($n=32$) & $94.11$ & $94.12$ & $94.12$ \\ \hline
\end{tabular}
}
\caption{Performance comparison of DP-SGD, promptPATE, DP-ICL and LDP-ICL under various privacy budgets.}
\label{tab:performance_comparison}
\end{table}

%%&&&&&&&&&&&&&&&&&&&&&&&&&&&

\subsubsection{Ablation study}
Our intuition for choosing demonstration examples was to assess whether a model can learn input-label mappings and override semantic priors. A performance below 50\% accuracy in FL-ICL indicates the model's ability to achieve this\cite{wei2023larger}. 

\begin{table}[!htbp]
\centering
\begin{tabular}{|c|ccccc|}
\hline
\multirow{2}{*}{Dataset} & \multicolumn{5}{c|}{Number of demonstration examples}                                                              \\ \cline{2-6} 
                         & \multicolumn{1}{c|}{4}  & \multicolumn{1}{c|}{8}  & \multicolumn{1}{c|}{16} & \multicolumn{1}{c|}{32} & 64 \\ \hline
SST-2                    & \multicolumn{1}{c|}{55} & \multicolumn{1}{c|}{44} & \multicolumn{1}{c|}{33} & \multicolumn{1}{c|}{31} & 32 \\ \hline
Subj                     & \multicolumn{1}{c|}{93} & \multicolumn{1}{c|}{78} & \multicolumn{1}{c|}{64} & \multicolumn{1}{c|}{38} & 39 \\ \hline
\end{tabular}
\caption{Performance of FL-ICL over number of demonstration examples on SST-2 and Subj.}
\label{Table:FL-ICL}
\end{table}

Table \ref{Table:FL-ICL} presents the performance for the selection of  $n=32$ demonstration examples, which indicates that the accuracy rate falls below 50\% and hence show  LLM's capability to learn input-label mappings and override semantic priors. Additionally, we carried out ablation studies to analyze how varying the quantity of demonstration examples affects the sentiment analysis performance for the SST-2 task. The analysis was conducted under three demonstration number cases: $n = 16, 32, 64$.

As depicted in Figure \ref{fig:ablation}, we find that these three curves exhibit an identical trend of change regardless of the variation in example quantities. A  variation is observed when the privacy parameter falls within the range of 0 to 0.5: with an increase in the number of examples, accuracy diminishes. Our initial analysis indicates that for a given privacy parameter value, a higher quantity of examples leads to a higher count of flipped examples, which implies a more powerful task-learning ability and hence a less accurate prediction rate. Drawing on the preceding formula (Eq. (5)) in the main text), the heightened presence of flipped examples corresponds to a more pronounced influence on the accuracy.
 \begin{figure}[!htbp]
    \centering
    \includegraphics[width=0.7\linewidth]{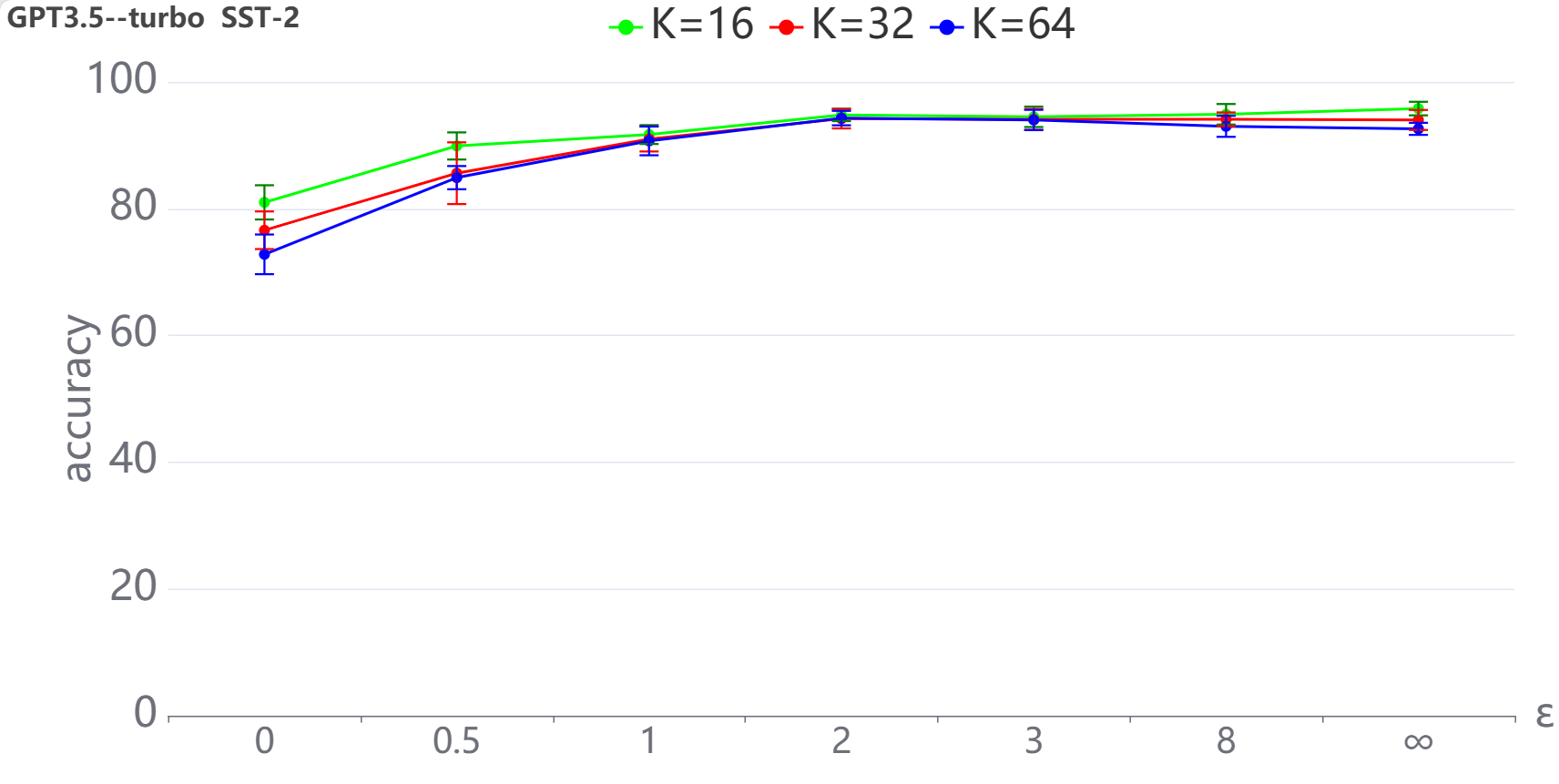}
    \caption{Performance across numbers of the examples}
    \label{fig:ablation}
\end{figure}

%\subsection{Ablation and Discussion} 
\section{\textbf{Related Works and Conclusion}} \label{sec:conclusions}

In this paper, following the tradition in DP \citep{Dinur2003revealing} literature, we treat the inputs in the input-label pairs as identifiers and hence nonsensitive but regard \emph{only the labels as sensitive}.  In this setting, 
we are \emph{the first} to study the locally differentially private ICL. Our privacy-preserving of labels is different from the so-called label differential privacy \cite{Ghazi2021deep}, which is essentially central DP. There is a rich literature on privacy-preserving ICL. Here we only discuss some closely related to our work.  There are some works which work on the privacy-preservation for ICL but mainly focus on the central DP that assume that the curator is trusted \citep{Duan2023flocks,Panda2023differentially}.  In some sense, our approach is similar to PromptPATE in \citep{Duan2023flocks} that both use noisy prompts prepended to a query to perform in-context learning. The main difference is the method to add noises. In PromptPATE, the noise is added to the ensembled result in a central way while in our LDP-ICL, we add noise locally to the labels in prompts.  In \citep{Yu2021differentially,Li2021large}, they  deal with the DP fine tuning of the parameters of the LLMs.  In contrast, we regard LLMs frozen and the in-context learning proceeds without modifications of the parameters.  In \citep{Li2023privacy}, they use text-to-text privatization while our work focuses on only the privatization of the labels.  We adapt the ideas of ICL as Transformer attention mechanism by a dual implicit gradient descent optimization from \citep{Dai2022can,von2023transformers,Irie2022dual}. But those papers mainly deal with linear regression problem while we work on the classification problem. Proposition 1 in our paper is based on Appendix A in \citep{Zhmoginov2022hypertransformer}.   It this paper, our experiments are run on some common datasets for classification which are not privacy-sensitive (probably Ethos is an exception). In the future, we will try a privacy-sensitive synthetic dataset.  In this paper, the perturbation is on the labels only, which is a quite limited case. We plan to employ LDP for more general cases of demonstrations in ICL. 

The selection of demonstrations is an important issue \citep{Zhang2022active,Rubin2022learning,Zhao2021calibrate,Dong2022survey} that we have not addressed yet in this paper.  From our formula (Eq. (\ref{eq:LDP-ICL})), we know that a good selection of demonstrations can improve the trade-off between privacy and utility of LDP-ICL. We would like to find an optimal adaptive selection algorithm for  our LDP-ICL. In this paper, we treat only labels as sensitive. We also will privatize the input sentences or words with local differential privacy \cite{Du2023sanitizing,Li2023privacy,Yue2021differential}.

%%%%%%%%%%%%%%%%%%%%%%%%%%
%\section*{Acknowledgements}
%We want to thank the three reviewers for their constructive comments. 

\section{References} \label{sec:reference}
\bibliographystyle{lrec-coling2024-natbib}
\bibliography{Savage}

\clearpage

\renewcommand\thesection{\Alph{section}}
\setcounter{section}{0}

\section*{\centering \LARGE \textbf{Appendix}}

\section{Proof of Proposition 1}

For simplicity, we don't specify the sizes of different matrices. The context will determine the sizes.  

We define matrix $P$ and  operator $\sigma^-$  by 

\[ P = \eta I, \text{ and }
\sigma^-(  
\left[
  \begin{array}{c}
    \boldsymbol{W}_0 x_i   \\
     y_i  \\
 \end{array}
\right] )
=
\begin{bmatrix}
          0 \\
          \sigma(\boldsymbol{W}_0 x_i)-y_i
         \end{bmatrix}
\]

where $I$ is the identity matrix. Note that $\sigma^-$ is just a sigmoid function followed by a subtraction. Define
\[
\boldsymbol{W}_K = \boldsymbol{W}_Q =
\left[
  \begin{array}{cc}
     I & 0   \\
    0 & 0   \\
 \end{array}
\right],
\boldsymbol{W}_V= 
\begin{bmatrix}
         \boldsymbol{W}_0 & 0 \\
          0 & I 
         \end{bmatrix}
\]

Consider

\begin{align*}
     P
\sum_{i=1}^n [&(\sigma^-(
\left[
  \begin{array}{cc}
     \boldsymbol{W}_0 &  0\\
   0  &  I\\
     \end{array}
\right]
\begin{bmatrix}
           x_i \\
          y_i 
         \end{bmatrix}) \\
& \otimes 
(\begin{bmatrix}
           I & 0 \\
          0 & 0 
         \end{bmatrix}
         \begin{bmatrix}
           x_i \\
          y_i 
         \end{bmatrix})]
    (\begin{bmatrix}
           I & 0 \\
          0 & 0 
         \end{bmatrix}
         \begin{bmatrix}
           x_{\text{test}} \\
           p_{\text{test}} 
         \end{bmatrix})
 \end{align*}

We can compute the above expression and obtain that it is equal to $\eta \sum_{i=1}^n (\sigma(\boldsymbol{W}_0 x_i)- y_i)) \boldsymbol{x_i}^T \boldsymbol{x}_{\text{test}}$, which is just $-\Delta \boldsymbol{W} x_{\text{test}}$.  Then the matrices $\boldsymbol{W}_{\boldsymbol{v}}, \boldsymbol{W}_{\boldsymbol{k}}$ and $\boldsymbol{W}_{\boldsymbol{q}}$ can be constructed from $\sigma^-$ and the above matrices $\boldsymbol{W}_V$, $\boldsymbol{W}_K$ and $\boldsymbol{W}_Q$  respectively.

\section{Templates for In-context Learning}
\subsection{Format}
In this paper, we adopt a uniform prompt template for each task, which is structured to present inputs and their corresponding labels. We use a newline to separate the input and the label as well as each demonstration example. All prompts are designed to be answered with a single label responses (e.g., "Positive/Negative", "Subject/Object") so that we can directly check the prediction results of LLM instead of applying decoding methods.

\begin{table*}[!htbp]
\centering
\begin{tabular}{c|c|c}
\hline
Dataset   & Template                                                                        & Label Set                       \\ \hline
SST-2     & \begin{tabular}[c]{@{}l@{}} Input: \{Sentence\}\\ Output: \{Label\}\end{tabular} & \{Negative,Positive\}           \\ \hline
Subj      & \begin{tabular}[c]{@{}l@{}}Input: \{Sentence\}\\ Output: \{Label\}\end{tabular} & \{Objective,Subjective \}       \\ \hline
Ethos     & \begin{tabular}[c]{@{}l@{}}Input: \{Comment\}\\ Output: \{Label\}\end{tabular}  & \{Not hate speech,Hate speech\} \\ \hline
SMS\_Spam & \begin{tabular}[c]{@{}l@{}}Input: \{Message\}\\ Output: \{Label\}\end{tabular}  & \{Ham,Spam\}                    \\ \hline
\end{tabular}

\label{Table:table1}

\caption{Templates and corresponding label set format for four classification tasks.}
\end{table*}

\subsection{Prompt examples}

\subsubsection{SST-2}
\setlength{\leftmargini}{2em} % 调整缩进距离
\begin{itemize}
%     \item \textbf{Input}: are more deeply thought through than in most `right-thinking' films.
    
% \textbf{Output}: Positive
% \item \textbf{Input}: hide new secretions from the parental units.

% \textbf{Output}: Negative
% \item \textbf{Input}: that loves its characters and communicates something rather beautiful about human nature.

% \textbf{Output}: Positive
% \item \textbf{Input}: contains no wit, only labored gags.

% \textbf{Output}: Negative
\item \textbf{Input}: hide new secretions from the parental units 

\textbf{Output}: Negative

\item \textbf{Input}: that loves its characters and communicates something rather beautiful about human nature 

\textbf{Output}: Positive

\item \textbf{Input}: contains no wit , only labored gags 

\textbf{Output}: Negative

\item \textbf{Input}: demonstrates that the director of such hollywood blockbusters as patriot games can still turn out a small , personal film with an emotional wallop . 

\textbf{Output}: Positive

\item \textbf{Input}: remains utterly satisfied to remain the same throughout 

\textbf{Output}: Negative

\item \textbf{Input}: of saucy 

\textbf{Output}: Positive

\item \textbf{Input}: on the worst revenge-of-the-nerds clichs the filmmakers could dredge up 

\textbf{Output}: Negative

\item \textbf{Input}: are more deeply thought through than in most ` right-thinking' films 

\textbf{Output}: Positive

\item \textbf{Input}: that's far too tragic to merit such superficial treatment 

\textbf{Output}: Negative

\item \textbf{Input}: the greatest musicians

\textbf{Output}: Positive

\item \textbf{Input}: a depressed fifteen-year-old's suicidal poetry 

\textbf{Output}: Negative

\item \textbf{Input}: with his usual intelligence and subtlety 

\textbf{Output}: Positive

\item \textbf{Input}: goes to absurd lengths 

\textbf{Output}: Negative

\item \textbf{Input}: swimming is above all about a young woman's face , and by casting an actress whose face projects that woman's doubts and yearnings , it succeeds . 

\textbf{Output}: Positive

\item \textbf{Input}: for those moviegoers who complain that ` they don't make movies like they used to anymore 

\textbf{Output}: Negative

\item \textbf{Input}: equals the original and in some ways even betters it 

\textbf{Output}: Positive

\item \textbf{Input}: the part where nothing's happening , 

\textbf{Output}: Negative

\item \textbf{Input}: if anything , see it for karen black , who camps up a storm as a fringe feminist conspiracy theorist named dirty dick . 

\textbf{Output}: Positive

\item \textbf{Input}: saw how bad this movie was 

\textbf{Output}: Negative

\item \textbf{Input}: a smile on your face 

\textbf{Output}: Positive

\item \textbf{Input}: lend some dignity to a dumb story 

\textbf{Output}: Negative

\item \textbf{Input}: comes from the brave , uninhibited performances 

\textbf{Output}: Positive

\item \textbf{Input}: cold movie 

\textbf{Output}: Negative

\item \textbf{Input}: enriched by an imaginatively mixed cast of antic spirits 

\textbf{Output}: Positive

\item \textbf{Input}: redundant concept 

\textbf{Output}: Negative

\item \textbf{Input}: in world cinema 

\textbf{Output}: Positive

\item \textbf{Input}: excruciatingly unfunny and pitifully unromantic 

\textbf{Output}: Negative

\item \textbf{Input}: very good viewing alternative 

\textbf{Output}: Positive

\item \textbf{Input}: which half of dragonfly is worse : the part where nothing's happening , or the part where something's happening 

\textbf{Output}: Negative

\item \textbf{Input}: on all cylinders 

\textbf{Output}: Positive

\item \textbf{Input}: the plot is nothing but boilerplate clichs from start to finish , 

\textbf{Output}: Negative

\item \textbf{Input}: more than another `` best man '' clone by weaving a theme throughout this funny film 

\textbf{Output}: Positive

\item \textbf{Input}: the action is stilted 

\textbf{Output}: 

Answer: Negative

\end{itemize}

\subsubsection{Subj}
\begin{itemize}
%     \item \textbf{Input}: The book tells of murray , the old scot patriot , who has had his eyes torn out and his house taken away during the english invasion.
    
% \textbf{Output}: Objective
% \item \textbf{Input}: presents a most persuasive vision of hell on earth.

% \textbf{Output}: Subjective

% \item \textbf{Input}: Still suffering from her hangover, Julie doesn't realize that Ellen is missing when the school bus leaves the cemetery.

% \textbf{Output}: Objective
% \item \textbf{Input}: The film's sharp, often mischievous sense of humor will catch some off guard.

% \textbf{Output}: Subjective

\item \textbf{Input}: the tucks have a secret , they're immortal . they

\textbf{Output}: Objective

\item \textbf{Input}: check your brain and your secret agent decoder ring at the door because you don't want to think too much about what's going on . the movie does has some entertainment value - how much depends on how well you like chrisrock .

\textbf{Output}: Subjective

\item \textbf{Input}: this could be lizzy's only chance to start a new life and recreate the family she tragically lost as a child .

\textbf{Output}: Objective
\item \textbf{Input}: a beautifully tooled action thriller about love and terrorism in korea .

\textbf{Output}: Subjective
\item \textbf{Input}: the book tells of murray , the old scot patriot , who has had his eyes torn out and his house taken away during the english invasion .

\textbf{Output}: Objective
\item \textbf{Input}: presents a most persuasive vision of hell on earth .

\textbf{Output}: Subjective
\item \textbf{Input}: naturally , he returns to his analyst dr . ben sobel ( crystal ) for help and finds that sobel needs some serious help himself as he has inherited the family practice , as well as an excess stock of stress .

\textbf{Output}: Objective
\item \textbf{Input}: the film's sharp , often mischievous sense of humor will catch some off guard . . .

\textbf{Output}: Subjective
\item \textbf{Input}: still suffering from her hangover , julie does n't realize that ellen is missing when the school bus leaves the cemetery .

\textbf{Output}: Objective
\item \textbf{Input}: buries an interesting storyline about morality and the choices we make underneath such a mountain of clichM-is and borrowed images that it might more accurately be titled mr . chips off the old block .

\textbf{Output}: Subjective
\item \textbf{Input}: several people are listening to keith's plight on the radio and are making changes of their own .

\textbf{Output}: Objective
\item \textbf{Input}: like the hugely successful first film , the sequel provides pleasant , engaging if largely disposable entertainment for the general audience .

\textbf{Output}: Subjective
\item \textbf{Input}: real escape , however , may be a goal beyond possibility .

\textbf{Output}: Objective
\item \textbf{Input}: a sweet , engaging family story about the nature of faith and the power of youthful innocence .

\textbf{Output}: Subjective
\item \textbf{Input}: unfortunately , he neglects to mention that he is a professional gambler with bad instincts and that the jaguar is borrowed .

\textbf{Output}: Objective
\item \textbf{Input}: the filmmakers skillfully evoke the sense of menace that nature holds for many urban dwellers .

\textbf{Output}: Subjective
\item \textbf{Input}: in this racially-charged climate , the lapd's elite special investigations squad ( sis ) is assigned a high-profile quadruple homicide .

\textbf{Output}: Objective
\item \textbf{Input}: a remarkable 179-minute meditation on the nature of revolution .

\textbf{Output}: Subjective
\item \textbf{Input}: this documentary follows a business partnership between a clan of palestinian fisherman from refugee camps in gaza and israelis from the settlement of dugit , gaza strip .

\textbf{Output}: Objective
\item \textbf{Input}: . . . routine , harmless diversion and little else .

\textbf{Output}: Subjective
\item \textbf{Input}: this painter's block has lead to a desperate man obsessed with the need to paint , leaving the artist in a state of disarray .

\textbf{Output}: Objective
\item \textbf{Input}: i suspect this is the kind of production that would have been funnier if the director had released the outtakes theatrically and used the film as a bonus feature on the dvd .

\textbf{Output}: Subjective
\item \textbf{Input}: all social structures break down and a new world order emerges from the heart of the desert .

\textbf{Output}: Objective
\item \textbf{Input}: fresh and raw like a blown-out vein , narc takes a walking-dead , cop-flick subgenre and beats new life into it .

\textbf{Output}: Subjective
\item \textbf{Input}: when blake is initiated , he agrees to abide by the three cardinal rules : share your spirit , body and soul with others ; devote your life to saving homeless kids ; and never leave unless the master casts you out .

\textbf{Output}: Objective
\item \textbf{Input}: its aims are modest -- to make kids laugh as cuba gooding falls down a lot and to show how thrilling sled dogs are -- and it achieves them .

\textbf{Output}: Subjective
\item \textbf{Input}: desert rose is an erotic thriller that takes you through one man's life experience .

\textbf{Output}: Objective
\item \textbf{Input}: on the surface , the film is played straight , with sustained tension and effective frights . but the director has done his research , and fills every scene with a subversive humor that's 100 \% pure imitation hitchcock .

\textbf{Output}: Subjective
\item \textbf{Input}: witt plays a young harvard graduate who also represents grant's character .

\textbf{Output}: Objective
\item \textbf{Input}: . . . digs beyond the usual portrayals of good kids and bad seeds to reveal a more ambivalent set of characters and motivations .

\textbf{Output}: Subjective
\item \textbf{Input}: blade's team and the bloodpack are the only line of defense which can prevent the reaper population from wiping out the vampire and human populations .

\textbf{Output}: Objective
\item \textbf{Input}: if the count of monte cristo does n't transform caviezel into a movie star , then the game is even more rigged than it was two centuries ago .

\textbf{Output}: Subjective
\item \textbf{Input}: they intend to bring her back to tel-aviv in order to save her life .

\textbf{Output}: 

Answer: Objective

\end{itemize}

\subsubsection{Ethos}
\begin{itemize}
%     \item \textbf{Input}: American people are dumbass. They don't even know geography.
    
% \textbf{Output}: Hate speech
% \item \textbf{Input}: Why are people so dumb? The police were yelling at everyone to get out of the area yet people just stood there or recorded, guess they want to die.

% \textbf{Output}: Not hate speech
% \item \textbf{Input}: Jews are lower class pigs.

% \textbf{Output}: Hate speech
% \item \textbf{Input}: White police officers are always the victims.

% \textbf{Output}: Not hate speech
\item \textbf{Input}: men and women are not equal.  irrational contrary belief and policy only result in mounting failure.

\textbf{Output}: Not hate speech
\item \textbf{Input}: Asians have more fats so they are yellow.  So fat people are yellow?...

\textbf{Output}: Hate speech
\item \textbf{Input}: I'm very sad about this fucking situation with all this muslims

\textbf{Output}: Not hate speech
\item \textbf{Input}: As an SJW, it is the right of black people to rob and kill me. To not allow them is a white privilege.

\textbf{Output}: Hate speech
\item \textbf{Input}: You look like someone who would do an electric wheelchair race with Stephen Hawking.

\textbf{Output}: Not hate speech
\item \textbf{Input}: Any lesbian that doesn't look like a complete hog actually wants to be with a man.

\textbf{Output}: Hate speech
\item \textbf{Input}: You can ask me to call you a male, but you'll always be a woman

\textbf{Output}: Not hate speech
\item \textbf{Input}: An ugly man in a wig and dress, is still just an ugly man in a wig and dress... NOT A WOMAN!!!

\textbf{Output}: Hate speech
\item \textbf{Input}: Yo, if I’m going to hell for being homosexual, why did god make me like this?

\textbf{Output}: Not hate speech
\item \textbf{Input}: an Asian Moe Howard piece of shit

\textbf{Output}: Hate speech
\item \textbf{Input}: Why are people so dumb? The police were yelling at everyone to get out of the area yet people just stood there or recorded, guess they want to die.

\textbf{Output}: Not hate speech
\item \textbf{Input}: American people are dumbass. They don't even know geography

\textbf{Output}: Hate speech
\item \textbf{Input}: White police officers are always the victims.

\textbf{Output}: Not hate speech
\item \textbf{Input}: American Indians are finally almost extinguished

\textbf{Output}: Hate speech
\item \textbf{Input}: WHAT THE FUCK IS THIS ??!????! Disgusting! Absolutely disgusted and disappointed in Today's America.

\textbf{Output}: Not hate speech
\item \textbf{Input}: All muslims worship Satan¬ü¬ï¬ã 5 times a day

\textbf{Output}: Hate speech
\item \textbf{Input}: Well..congratulations u loose one of ur subscriber.... sHame on u guys u guys sick.

\textbf{Output}: Not hate speech
\item \textbf{Input}: all mental illnesses are awful and must be treated

\textbf{Output}: Hate speech
\item \textbf{Input}: We need to exterminate people that Subject children to such Perversion

\textbf{Output}: Not hate speech
\item \textbf{Input}: All girls like you should have a pimp to guide them to successful prostitution

\textbf{Output}: Hate speech
\item \textbf{Input}: We are repeating our past before it used to Germany and Jews now its China and Muslims

\textbf{Output}: Not hate speech
\item \textbf{Input}: After all this time the african americans are the people with most crime rate

\textbf{Output}: Hate speech
\item \textbf{Input}: Trumps a liar and an idiot, a douche bag , Narcissist, and Zionist boot licker.

\textbf{Output}: Not hate speech
\item \textbf{Input}: actually autism can make you a retard

\textbf{Output}: Hate speech
\item \textbf{Input}: Those illegals want to sue the company for money owed.What about the money they owe America. It cost money to educated, medical care, welfare, ect.

\textbf{Output}: Not hate speech
\item \textbf{Input}: Jews are lower class pigs.

\textbf{Output}: Hate speech
\item \textbf{Input}: this women has to be some sort of psychopath but it goes both ways, if either a male or a female are in a toxic relationship they need to get help, the sooner the better.

\textbf{Output}: Not hate speech
\item \textbf{Input}: Send all these whiny black crybabies back to central Apefrica to live for just one month. They would be crying to come back to America where White people pamper and coddle them. Let's try the experiment one time.

\textbf{Output}: Hate speech
\item \textbf{Input}: This people are treating women like dogs!!!!!!

\textbf{Output}: Not hate speech
\item \textbf{Input}: Trans people=gays and loosers

\textbf{Output}: Hate speech
\item \textbf{Input}: This fuckhead makes me asHamed of being Jewish. Shit people like him are destroying everything

\textbf{Output}: Not hate speech
\item \textbf{Input}: The Jew Faggot Behind The Financial Collapse.

\textbf{Output}: Hate speech
\item \textbf{Input}: This cop had to get him under arrest for walking like that freely out in public. Problematics must be stay home locked. Sue her parents.

\textbf{Output}: 

Answer: Not hate speech

\end{itemize}

\subsubsection{SMS\_Spam}
\begin{itemize}
%     \item \textbf{Input}: Aaooooright are you at work?
    
% \textbf{Output}: Ham

% \item \textbf{Input}: You are a winner U have been specially selected 2 receive $\pounds$1000 cash or a 4$*$holiday (flights inc) speak to a live operator 2 claim 0871277810810.

% \textbf{Output}: Spam
% \item 
% \textbf{Input}: As per your request `Melle Melle (Oru Minnaminunginte Nurungu Vettam)' has been set as your callertune for all Callers. Press *9 to copy your friends Callertune.

% \textbf{Output}: Ham
% \item \textbf{Input}: Customer service annoncement. You have a New Years delivery waiting for you. Please call 07046744435 now to arrange delivery.

% \textbf{Output}: Spam

\item \textbf{Input}: The wine is flowing and i'm i have nevering.

\textbf{Output}: Ham
\item \textbf{Input}: Customer service annoncement. You have a New Years delivery waiting for you. Please call 07046744435 now to arrange delivery

\textbf{Output}: Spam
\item \textbf{Input}: Yup i thk cine is better cos no need 2 go down 2 plaza mah.

\textbf{Output}: Ham
\item \textbf{Input}: You are a winner U have been specially selected 2 receive £1000 cash or a 4*holiday (flights inc) speak to a live operator 2 claim 0871277810810

\textbf{Output}: Spam
\item \textbf{Input}: Ok... Ur typical reply...

\textbf{Output}: Ham
\item \textbf{Input}: -PLS STOP bootydelious (32/F) is inviting you to be her friend. Reply YES-434 or NO-434 See her: www.SMS.ac/u/bootydelious STOP? Send STOP FRND to 62468

\textbf{Output}: Spam
\item \textbf{Input}: As per your request 'Melle Melle (Oru Minnaminunginte Nurungu Vettam)' has been set as your callertune for all Callers. Press *9 to copy your friends Callertune

\textbf{Output}: Ham
\item \textbf{Input}: BangBabes Ur order is on the way. U SHOULD receive a Service Msg 2 download UR content. If U do not, GoTo wap. bangb. tv on UR mobile internet/service menu

\textbf{Output}: Spam
\item \textbf{Input}: You are everywhere dirt, on the floor, the windows, even on my shirt. And sometimes when i open my mouth, you are all that comes flowing out. I dream of my world without you, then half my chores are out too. A time of joy for me, lots of tv shows i.ll see. But i guess like all things you just must exist, like rain, hail and mist, and when my time here is done, you and i become one.

\textbf{Output}: Ham
\item \textbf{Input}: URGENT! We are trying to contact you. Last weekends draw shows that you have won a £900 prize GUARANTEED. Call 09061701939. Claim code S89. Valid 12hrs only

\textbf{Output}: Spam
\item \textbf{Input}: Aaooooright are you at work?

\textbf{Output}: Ham
\item \textbf{Input}: Please call our customer service representative on FREEPHONE 0808 145 4742 between 9am-11pm as you have WON a guaranteed £1000 cash or £5000 prize!

\textbf{Output}: Spam
\item \textbf{Input}: I'm leaving my house now...

\textbf{Output}: Ham
\item \textbf{Input}: Are you unique enough? Find out from 30th August. www.areyouunique.co.uk

\textbf{Output}: Spam
\item \textbf{Input}: Hello, my love. What are you doing? Did you get to that interview today? Are you you happy? Are you being a good boy? Do you think of me?Are you missing me ?

\textbf{Output}: Ham
\item \textbf{Input}: 500 New Mobiles from 2004, MUST GO! Txt: NOKIA to No: 89545 \& collect yours today!From ONLY £1 www.4-tc.biz 2optout 087187262701.50gbp/mtmsg18

\textbf{Output}: Spam
\item \textbf{Input}: Keep yourself safe for me because I need you and I miss you already and I envy everyone that see's you in real life

\textbf{Output}: Ham
\item \textbf{Input}: Will u meet ur dream partner soon? Is ur career off 2 a flyng start? 2 find out free, txt HORO followed by ur star sign, e. g. HORO ARIES

\textbf{Output}: Spam
\item \textbf{Input}: New car and house for my parents.:)i have only new job in hand:)

\textbf{Output}: Ham
\item \textbf{Input}: Text \& meet someone sexy today. U can find a date or even flirt its up to U. Join 4 just 10p. REPLY with NAME \& AGE eg Sam 25. 18 -msg recd@thirtyeight pence

\textbf{Output}: Spam
\item \textbf{Input}: I'm so in love with you. I'm excited each day i spend with you. You make me so happy.

\textbf{Output}: Ham
\item \textbf{Input}: U 447801259231 have a secret admirer who is looking 2 make contact with U-find out who they R*reveal who thinks UR so special-call on 09058094597

\textbf{Output}: Spam
\item \textbf{Input}: I place all ur points on e cultures module already.

\textbf{Output}: Ham
\item \textbf{Input}: Congratulations ur awarded 500 of CD vouchers or 125gift guaranteed \& Free entry 2 100 wkly draw txt MUSIC to 87066 TnCs www.Ldew.com1win150ppmx3age16

\textbf{Output}: Spam
\item \textbf{Input}: Hi frnd, which is best way to avoid missunderstding wit our beloved one's?

\textbf{Output}: Ham
\item \textbf{Input}: We tried to contact you re your reply to our offer of a Video Handset? 750 anytime networks mins? UNLIMITED TEXT? Camcorder? Reply or call 08000930705 NOW

\textbf{Output}: Spam
\item \textbf{Input}: Great escape. I fancy the bridge but needs her lager. See you tomo 

\textbf{Output}: Ham
\item \textbf{Input}: Hey I am really horny want to chat or see me naked text hot to 69698 text charged at 150pm to unsubscribe text stop 69698

\textbf{Output}: Spam
\item \textbf{Input}: Yes :)it completely in out of form:)clark also utter waste.

\textbf{Output}: Ham
\item \textbf{Input}: Ur ringtone service has changed! 25 Free credits! Go to club4mobiles.com to choose content now! Stop? txt CLUB STOP to 87070. 150p/wk Club4 PO Box1146 MK45 2WT

\textbf{Output}: Spam
\item \textbf{Input}: Sir, I need AXIS BANK account no and bank address.

\textbf{Output}: Ham
\item \textbf{Input}: Ringtone Club: Get the UK singles chart on your mobile each week and choose any top quality ringtone! This message is free of charge.

\textbf{Output}: Spam
\item \textbf{Input}: Hmmm.. Thk sure got time to hop ard... Ya, can go 4 free abt... Muz call u to discuss liao... 

\textbf{Output}: 

Answer: Ham

\end{itemize}

\section{Constraints of Demonstrations}

To mitigate the influence of sensitive settings on ICL performance, including prompting templates, input-label mappings, text distribution, and label space considerations, we ensure that our demonstration examples meet the following constraints(\cite{min-etal-2022-rethinking}):
\begin{itemize}
    \item \textbf{Following unified input and output format:} Ensure that all the $n$ demonstration examples follow a consistent input and output format. 
    \item \textbf{Covering all values in the label space:} Ensure that all the $n$ demonstration examples cover the entire range of possible values in the label space.
    \item \textbf{Selecting from training dataset:} Select the demonstration examples from the training dataset itself. 
    \item \textbf{Forming input-output pair:} Ensure that each of the examples is associated with a corresponding output to create an input-output pair.
\end{itemize}

\section{Additional Experiments}

\subsection{Detailed experiments results}

Table \ref{Table:table2},\ref{Table:table3},\ref{Table:table4},\ref{Table:table5} present the detailed experimental results of the LDP-ICL classification scenario, as illustrated in Figure 3 in the main text.

\begin{table*}[!htbp]
\centering
\begin{tabular}{cccccccccc}
\\ \hline
\multirow{6}{*}{SST-2} & \multicolumn{1}{l}{$\epsilon$=0} & \multicolumn{1}{l}{$\epsilon$=0.5} & \multicolumn{1}{l}{$\epsilon$=1} & \multicolumn{1}{l}{$\epsilon$=2} & \multicolumn{1}{l}{$\epsilon$=3} & \multicolumn{1}{l}{$\epsilon$=8} & \multicolumn{1}{l}{$\epsilon$=$\infty$(ICL)} & \multicolumn{1}{l}{ZSL} & \multicolumn{1}{l}{FL-ICL} \\   \hline
     & 80.00                   & 90.00                     & 91.33                   & 95.33                   & 96.67                   & 94.70                   & 95.33                        & 80.00                         & 36.00                      \\
     & 76.67                   & 86.70                     & 91.33                   & 94.67                   & 94.00                   & 95.33                   & 94.70                        & 74.00                         & 39.33                      \\
     & 76.00                   & 88.70                     & 93.33                   & 95.33                   & 93.33                   & 95.33                   & 96.00                        & 79.33                         & 33.33                      \\
     & 74.67                   & 81.33                     & 90.67                   & 96.00                   & 96.00                   & 94.00                   & 94.67                        & 77.33                         & 42.00                      \\
     & 69.33                   & 76.00                     & 86.67                   & 91.33                   & 91.33                   & 92.00                   & 91.33                        & 72.67                         & 38.00                      \\
     & 76.67                   & 90.67                     & 92.67                   & 92.67                   & 93.33                   & 93.33                   & 92.67                        & 79.33                         & 37.33                      \\ \hline
AVG & 75.56                   & 85.57                     & 91.00                   & 94.22                   & 94.11                   & 94.12                   & 94.12                        & 77.11                         & 37.66                      \\ \hline
STD & 2.98                    & 4.87                      & 1.97                    & 1.54                    & 1.65                    & 1.10                    & 1.49                         & 2.61                          & 2.49                \\ \hline     
\end{tabular}

\caption{Performance of LDP-ICL and baselines across six runs on SST-2($n=32$).}
\label{Table:table2}
\end{table*}

\begin{table*}[!htbp]
\begin{tabular}{cccccccccc}
\\ \hline
\multirow{6}{*}{Subj}   & \multicolumn{1}{l}{$\epsilon$=0} & \multicolumn{1}{l}{$\epsilon$=0.5} & \multicolumn{1}{l}{$\epsilon$=1} & \multicolumn{1}{l}{$\epsilon$=2} & \multicolumn{1}{l}{$\epsilon$=3} & \multicolumn{1}{l}{$\epsilon$=8} & \multicolumn{1}{l}{$\epsilon$=$\infty$(ICL)} & \multicolumn{1}{l}{ZSL} & \multicolumn{1}{l}{FL-ICL} \\   \hline
     & 56.67                   & 66.00                     & 76.67                   & 84.67                   & 79.33                   & 80.67                   & 81.33                        & 58.00                         & 20.67                      \\
     & 59.33                   & 63.33                     & 70.67                   & 82.00                   & 86.67                   & 86.00                   & 84.67                        & 57.30                         & 22.67                      \\
     & 64.00                   & 66.00                     & 72.67                   & 84.00                   & 90.00                   & 90.00                   & 92.67                        & 58.00                         & 12.67                      \\
     & 57.33                   & 66.00                     & 76.67                   & 76.00                   & 84.00                   & 86.67                   & 84.00                        & 56.70                         & 17.33                      \\
     & 60.00                   & 66.67                     & 76.67                   & 76.67                   & 80.00                   & 78.00                   & 82.00                        & 56.70                         & 18.00                      \\
     & 54.00                   & 64.00                     & 71.33                   & 73.33                   & 80.67                   & 86.00                   & 84.00                        & 57.30                         & 21.33                      \\ \hline
AVG & 58.55                   & 65.33                     & 74.11                   & 79.45                   & 83.45                   & 84.56                   & 84.78                        & 57.33                         & 18.78                      \\ \hline
STD & 3.11                    & 1.22                      & 2.62                    & 4.31                    & 3.88                    & 4.01                    & 3.72                         & 0.53                          & 3.30              
\\ \hline
\end{tabular}

\caption{Performance of LDP-ICL and baselines across six runs on Subj($n=32$).}
\label{Table:table3}

\end{table*}

\begin{table*}[!htbp]
\begin{tabular}{cccccccccc}
\\ \hline
\multirow{6}{*}{Ethos}   & \multicolumn{1}{l}{$\epsilon$=0} & \multicolumn{1}{l}{$\epsilon$=0.5} & \multicolumn{1}{l}{$\epsilon$=1} & \multicolumn{1}{l}{$\epsilon$=2} & \multicolumn{1}{l}{$\epsilon$=3} & \multicolumn{1}{l}{$\epsilon$=8} & \multicolumn{1}{l}{$\epsilon$=$\infty$(ICL)} & \multicolumn{1}{l}{ZSL} & \multicolumn{1}{l}{FL-ICL} \\   \hline
     & 88.00                   & 90.67                     & 90.00                   & 94.00                   & 92.00                   & 92.00                   & 92.00                        & 72.67                         & 68.67                      \\
     & 87.07                   & 86.39                     & 80.95                   & 87.07                   & 89.11                   & 88.44                   & 87.07                        & 72.00                         & 68.66                      \\
     & 90.48                   & 90.48                     & 94.56                   & 93.88                   & 93.88                   & 94.56                   & 95.24                        & 82.67                         & 80.00                      \\
     & 94.56                   & 96.60                     & 98.64                   & 97.29                   & 97.96                   & 97.28                   & 98.64                        & 90.67                         & 84.00                      \\
     & 94.56                   & 95.24                     & 95.92                   & 94.56                   & 96.60                   & 95.92                   & 97.96                        & 91.33                         & 79.30                      \\
     & 93.12                   & 94.56                     & 95.24                   & 96.60                   & 96.60                   & 97.96                   & 97.28                        & 88.67                         & 79.30                      \\ \hline
AVG & 91.30                   & 92.32                     & 92.55                   & 93.90                   & 94.36                   & 94.36                   & 94.70                        & 83.00                         & 76.66                      \\ \hline
STD & 3.00                    & 3.49                      & 5.78                    & 3.31                    & 3.06                    & 3.28                    & 4.05                         & 8.04                          & 5.87                  \\ \hline    
\end{tabular}

\caption{Performance of LDP-ICL and baselines across six runs on Ethos($n=32$).}
\label{Table:table4}

\end{table*}

\begin{table*}[!htbp]
\begin{tabular}{cccccccccc}
\\ \hline
\multirow{6}{*}{SMS\_Spam}   & \multicolumn{1}{l}{$\epsilon$=0} & \multicolumn{1}{l}{$\epsilon$=0.5} & \multicolumn{1}{l}{$\epsilon$=1} & \multicolumn{1}{l}{$\epsilon$=2} & \multicolumn{1}{l}{$\epsilon$=3} & \multicolumn{1}{l}{$\epsilon$=8} & \multicolumn{1}{l}{$\epsilon$=$\infty$(ICL)} & \multicolumn{1}{l}{ZSL} & \multicolumn{1}{l}{FL-ICL} \\   \hline
     & 70.67                   & 78.67                     & 84.67                   & 92.00                   & 95.33                   & 94.67                   & 94.67                        & 77.30                         & 7.30                       \\
     & 68.00                   & 79.33                     & 83.33                   & 86.67                   & 90.67                   & 95.33                   & 94.67                        & 78.67                         & 4.70                       \\
     & 71.33                   & 79.33                     & 87.33                   & 91.33                   & 96.00                   & 98.67                   & 96.00                        & 73.33                         & 8.00                       \\
     & 65.33                   & 82.67                     & 88.00                   & 88.67                   & 95.30                   & 96.00                   & 95.33                        & 82.00                         & 10.00                      \\
     & 69.33                   & 74.67                     & 84.67                   & 92.00                   & 92.00                   & 95.33                   & 97.33                        & 80.00                         & 5.30                       \\
    & 76.67                   & 82.67                     & 90.00                   & 90.00                   & 97.33                   & 98.67                   & 97.33                        & 78.67                         & 10.67                      \\ \hline
AVG & 70.22                   & 79.56                     & 86.33                   & 90.11                  & 94.44                   & 96.44                   & 95.89                        & 78.33                         & 7.66                       \\ \hline
STD & 3.48                    & 2.71                      & 2.30                    & 1.94                    & 2.33                    & 1.62                    & 1.11                         & 2.66                          & 2.20                      \\ \hline
\end{tabular}

\caption{Performance of LDP-ICL and baselines across six runs on SMS\_Spam($n=32$).}
\label{Table:table5}

\end{table*}

\subsection{Ablation study}
Our intuition for choosing demonstration examples was to assess whether a model can learn input-label mappings and override semantic priors. A performance below 50\% accuracy in FL-ICL indicates the model's ability to achieve this\cite{wei2023larger}. 

\begin{table*}[!htbp]
\centering
\begin{tabular}{|c|ccccc|}
\hline
\multirow{2}{*}{Dataset} & \multicolumn{5}{c|}{Number of demonstration examples}                                                              \\ \cline{2-6} 
                         & \multicolumn{1}{c|}{4}  & \multicolumn{1}{c|}{8}  & \multicolumn{1}{c|}{16} & \multicolumn{1}{c|}{32} & 64 \\ \hline
SST-2                    & \multicolumn{1}{c|}{55} & \multicolumn{1}{c|}{44} & \multicolumn{1}{c|}{33} & \multicolumn{1}{c|}{31} & 32 \\ \hline
Subj                     & \multicolumn{1}{c|}{93} & \multicolumn{1}{c|}{78} & \multicolumn{1}{c|}{64} & \multicolumn{1}{c|}{38} & 39 \\ \hline
\end{tabular}
\caption{Performance of FL-ICL over number of demonstration examples on SST-2 and Subj.}
\label{Table:FL-ICL}
\end{table*}

Table \ref{Table:FL-ICL} presents the performance for the selection of  $n=32$ demonstration examples, which indicates that the accuracy rate falls below 50\% and hence show  LLM's capability to learn input-label mappings and override semantic priors. Additionally, we carried out ablation studies to analyze how varying the quantity of demonstration examples affects the sentiment analysis performance for the SST-2 task. The analysis was conducted under three demonstration number cases: $n = 16, 32, 64$.

As depicted in Figure \ref{fig:ablation}, we find that these three curves exhibit an identical trend of change regardless of the variation in example quantities. A  variation is observed when the privacy parameter falls within the range of 0 to 0.5: with an increase in the number of examples, accuracy diminishes. Our initial analysis indicates that for a given privacy parameter value, a higher quantity of examples leads to a higher count of flipped examples, which implies a more powerful task-learning ability and hence a less accurate prediction rate. Drawing on the preceding formula (Eq. (5)) in the main text), the heightened presence of flipped examples corresponds to a more pronounced influence on the accuracy.
 \begin{figure}[!htbp]
    \centering
    \includegraphics[width=\linewidth]{COMPARE.png}
    \caption{Performance across numbers of the examples}
    \label{fig:ablation}
\end{figure}

Specifically, the detailed results of $n=16$ and $n=64$ are listed in Table \ref{Table:table6} and Table \ref{Table:table7}.

\begin{table*}[!htbp]
\begin{tabular}{cccccccccc}
\\ \hline
\multirow{6}{*}{SST-2}   & \multicolumn{1}{l}{$\epsilon$=0} & \multicolumn{1}{l}{$\epsilon$=0.5} & \multicolumn{1}{l}{$\epsilon$=1} & \multicolumn{1}{l}{$\epsilon$=2} & \multicolumn{1}{l}{$\epsilon$=3} & \multicolumn{1}{l}{$\epsilon$=8} & \multicolumn{1}{l}{$\epsilon$=$\infty$(ICL)} & \multicolumn{1}{l}{ZSL} & \multicolumn{1}{l}{FL-ICL} \\   \hline
   & 83.30                   & 92.00                     & 89.30                   & 96.00                   & 96.70                   & 96.70                   & 97.30                        & 80.00                         & 36.49                      \\
   & 76.70                   & 90.70                     & 92.00                   & 94.00                   & 95.30                   & 95.30                   & 96.70                        & 74.00                         & 38.10                      \\
   & 82.00                   & 90.70                     & 93.30                   & 94.70                   & 94.70                   & 94.70                   & 95.30                        & 79.33                         & 38.78                      \\
  & 78.70                   & 90.00                     & 92.70                   & 96.00                   & 95.30                   & 96.70                   & 96.00                        & 77.33                         & 35.14                      \\
  & 84.70                   & 85.30                     & 92.70                   & 94.70                   & 92.70                   & 92.00                   & 94.00                        & 72.67                         & 32.43                      \\
   & 80.70                   & 90.70                     & 90.00                   & 93.30                   & 92.00                   & 94.00                   & 95.30                        & 79.33                         & 34.00                      \\ \hline
AVG     & 81.02                   & 89.90                     & 91.67                   & 94.78                   & 94.45                   & 94.90                   & 95.77                        & 77.11                         & 35.82                      \\ \hline
STD     & 2.70                    & 2.14                      & 1.49                    & 0.98                    & 1.61                    & 1.63                    & 1.07                         & 2.82                          & 2.22    \\ \hline               
\end{tabular}

\caption{Performance of LDP-ICL and baselines across six runs on on SST-2($n=16$).}
\label{Table:table6}
\end{table*}

\begin{table*}[!htbp]
\begin{tabular}{cccccccccc}
\\ \hline
\multirow{6}{*}{SST-2}   & \multicolumn{1}{l}{$\epsilon$=0} & \multicolumn{1}{l}{$\epsilon$=0.5} & \multicolumn{1}{l}{$\epsilon$=1} & \multicolumn{1}{l}{$\epsilon$=2} & \multicolumn{1}{l}{$\epsilon$=3} & \multicolumn{1}{l}{$\epsilon$=8} & \multicolumn{1}{l}{$\epsilon$=$\infty$(ICL)} & \multicolumn{1}{l}{ZSL} & \multicolumn{1}{l}{FL-ICL} \\   \hline
  & 70.67                   & 81.33                     & 90.67                   & 95.33                   & 94.67                   & 92.00                   & 93.33                        & 58.00                         & 38.67                      \\
  & 75.33                   & 85.33                     & 91.33                   & 94.67                   & 93.33                   & 93.33                   & 91.33                        & 57.30                         & 44.67                      \\
  & 69.33                   & 84.67                     & 94.67                   & 93.33                   & 96.00                   & 94.00                   & 93.33                        & 58.00                         & 44.00                      \\
   & 77.33                   & 86.00                     & 91.33                   & 96.00                   & 96.00                   & 96.00                   & 93.33                        & 56.70                         & 36.00                      \\
  & 69.33                   & 87.33                     & 88.00                   & 94.00                   & 92.66                   & 91.33                   & 93.20                        & 56.70                         & 45.33                      \\
   & 74.67                   & 84.67                     & 88.00                   & 92.67                   & 91.83                   & 91.33                   & 91.22                        & 57.30                         & 44.67                      \\ \hline
AVG     & 72.78                   & 84.89                     & 90.67                   & 94.33                   & 94.08                   & 93.00                   & 92.62                        & 57.33                         & 42.22                      \\ \hline
STD     & 3.14                    & 1.83                      & 2.28                    & 1.14                    & 1.60                    & 1.67                    & 0.96                         & 0.53                          & 3.56    \\ \hline                  
\end{tabular}
\caption{Performance of LDP-ICL and baselines across six runs on on SST-2($n=64$).}
\label{Table:table7}
\end{table*}

\end{document}